\newtheorem{theorem}{Theorem}
\newtheorem{lemma}{Lemma}
\newtheorem{corollary}{Corollary}
\begin{document}

\title{Local unitary symmetries and entanglement invariants}

\author{Markus Johansson} 
\address{ICFO-Institut de Ciencies Fotoniques, Mediterranean Technology Park, 08860 Castelldefels (Barcelona), Spain}
\address{Centre for Quantum Technologies, National University
of Singapore, 3 Science Drive 2, 117543 Singapore, Singapore }

\date{\today}
 
\begin{abstract}
We investigate the relation between local unitary symmetries and entanglement invariants of multi-qubit systems. The Hilbert space of such systems can be stratified in terms of states with different types of symmetry. We review the connection between this stratification and the ring of entanglement invariants and the corresponding geometric description in terms of algebraic varieties. On a stratum of a non-trivial symmetry group the invariants of the symmetry preserving operations gives a sufficient description of entanglement. Finding these invariants is often a simpler problem than finding the invariants of the local unitary group.
The conditions, as given by the Luna-Richardson theorem, for when the ring of such invariants is isomorphic to the ring of local unitary invariants on the stratum are discussed. 
As an example we consider symmetry groups that can be diagonalized by local unitary operations and for which the group action on each qubit is non-trivial.
On the stratum of such a symmetry the entanglement can be described in terms of a canonical form and the invariants of the symmetry preserving operations. This canonical form and the invariants are directly determined by the symmetry group.
Further, we briefly discuss how some recently proposed entanglement classification schemes capture symmetry properties.

\end{abstract}

\pacs{03.67.Mn, 03.65.Ud}

\maketitle
\section{Introduction}

Symmetries play an important role in many areas of physics. In quantum mechanics, entanglement makes possible local unitary symmetries of many particle systems that can not exist classically. These kinds of symmetries has been utilized for example in measurement based and topological quantum computation \cite{briegelraussdorff,hein,kitaev1,kitaev2,bombin1} and error correcting codes \cite{gottes}. 
In spin lattices such symmetries are relevant for phase transitions as well as string and topological orders \cite{bombin2,cirac1,cirac3,cirac4}.

The fact that many of the well studied entangled states including the Bell states, GHZ and W states have a non-trivial symmetry group has motivated the use of symmetries for characterization and classification of entanglement \cite{lindenpopescu, carteretlinden99, carteret}. For example, bipartite entanglement in relation to symmetries has been investigated in Refs. \cite{carteret, vollbrecht} and three-qubit entanglement and continuous symmetries has been studied in Ref. \cite{carteret}. 
More recently the relation between entanglement and symmetries for permutation invariant multi-qubit states has been explored \cite{hayashi,aulbach,markham,ribeiro,cenci1,cenci2,lyons,lewenstein}. The symmetry groups which are only exhibited by entangled systems can furthermore be used for entanglement verification protocols and witnesses \cite{geza,lyonswalck}. 

The classification and characterization of multipartite entanglement is a difficult problem due to the rich structure of different ways in which multiple subsystems can be entangled. Therefore, several approaches have been developed to give a meaningful structure to the set of entangled states. These include the use of entanglement invariants \cite{lindenpopescu,grassl,woot,carteretlinden99,linposud,woot2,sud,luq,luque05,OS05,OS06,djoko,OS09}, canonical forms \cite{acin,carteret00,acin2,dur,verstraete,kraus}, and geometric descriptions in terms of algebraic varieties \cite{acin,acin2,holweck}.
 
In this paper we investigate the relation between local unitary symmetry groups in multi-qubit systems and entanglement invariants. We review how such symmetries are related to different subsets of the entanglement invariants, and how this leads to a geometric description of the set of entangled states in terms of algebraic varieties. Furthermore, we use the theory of Luna stratifications \cite{luna73,alggeo} to see how the presence of a symmetry allows for a simplified description of entanglement in terms of the invariants of the group of symmetry preserving operations.
We apply the Luna-Richardson theorem \cite{lunarichardson79} to the context of multi-qubit systems and discuss the conditions for when symmetry determines the structure of the ring of polynomial entanglement invariants. In particular we study symmetry groups that are diagonalized by local unitary transformations and a description of entanglement in terms of canonical forms and invariants of symmetry preserving operations. These canonical forms and invariants are directly determined by the symmetry group and are closely related to the invariant comb approach for describing entanglement \cite{OS05,OS06,djoko,OS09}.
Using these results we then briefly discuss how different types of symmetries can be associated with different invariance properties and comment on the role of symmetries in some of the previously proposed classification schemes based on entanglement invariants \cite{viehmann,johanssonosterloh13,holweck}.

The paper is organized as follows.
We first briefly review some properties of entanglement invariants in Sect. \ref{sec1} followed by a review of how the Hilbert space of a multi-qubit system can be stratified into sets of states with different types of symmetries in Sect. \ref{sec2}. In Sect. \ref{sec3} we consider the groups of local unitary symmetry preserving operations and review the results from algebraic geometry that relates the invariants under action of this group to a description of the entanglement. 
In Sect. \ref{sec5} we treat the special case of non-trivial locally diagonalizable symmetries and describe the associated strata in terms of canonical forms and invariants under the symmetry preserving operations.  Finally, in Sect. \ref{sec6} the results are discussed and comments are made on the role of symmetries in some previously proposed classification schemes.

\section{Entanglement and symmetries}

\subsection{Entanglement invariants}
\label{sec1}
Any entanglement property of an $n$-qubit system is by definition invariant under some group $G$ of local operations that includes $\mathrm{SU(2)}^{\times{n}}$. Therefore, any function of such an entanglement property must be invariant under action of $G$ and thus constant on each $G$ orbit. To construct any possible function of entanglement we need a basic set of functions that can distinguish any two orbits, i.e., any difference in entanglement.

The orbits of a compact linear group $G$ acting on a real vector space are distinguished by the polynomial invariants \cite{popov}. In other words, for any two
$G$-orbits there is at least one invariant polynomial that takes different values on the orbits. However, this is not true for complex vector spaces.
Therefore, to get a set of invariants that distinguish the orbits of a complex vector space $X$, such as the $n$-qubit Hilbert space, one must represent the complex vectors as real vectors and consider the action of a real representation of $G$.
A polynomial invariant $I(|\psi\rangle)$ under such an action can be expressed as a function of the coefficients and the complex conjugated coefficients of the original complex vector $|\psi\rangle$ \cite{luque05}. A generating set of these invariants can be chosen from the bi-homogeneous polynomials. A bi-homogeneous polynomial $I(|\psi\rangle)$ scales under multiplication of $|\psi\rangle$ by a complex scalar $\lambda$ as $I(\lambda|\psi\rangle)=\lambda^{a}\lambda^{*b}I(|\psi\rangle)$ for some positive integers $a$ and $b$. The pair $(a,b)$ is the called the {\it bidegree} of $I(|\psi\rangle)$. 

Commonly one considers polynomial invariants of ${\mathrm{SU(2)}}^{\times{n}}$ and the subset of these polynomials invariant under ${\mathrm{SL(2)}}^{\times{n}}$.
The group ${\mathrm{SL(2)}}^{\times{n}}$ represents the stochastic local operations and classical communication (SLOCC), up to a global scaling factor of the state vector, and therefore the polynomials invariant under this group describe entanglement properties preserved by SLOCC. 
A bi-homogeneous polynomial invariant under ${\mathrm{SU(2)}}^{\times{n}}$ action is an invariant of ${\mathrm{SL(2)}}^{\times{n}}$ if and only if the bidegree $(a,b)$ satisfies $a=0$ or $b=0$, or if it is a product of such invariant polynomials. Since ${\mathrm{SL(2)}}^{\times{n}}$ is not a compact group its invariants do not distinguish all ${\mathrm{SL(2)}}^{\times{n}}$ orbits.
If at least one ${\mathrm{SL(2)}}^{\times{n}}$ invariant is non-vanishing for a state we say that it is SLOCC {\it semistable} \cite{mumford}. Otherwise it is called SLOCC {\it unstable}.




We denote the ring of polynomial invariants under action of $G$ on $X$ by $\mathbb{C}[X]^{G}$. All other functions of entanglement can in principle be constructed as functions on the ring of polynomial invariants.
An example of the more general functions that can be constructed is the field of ratios between polynomials in $\mathbb{C}[X]^{G}$. This so called field of fractions or {\it function field} of $\mathbb{C}[X]^{G}$ is denoted $\mathbb{C}(X)^{G}$ and includes also functions with singularities in lower dimensional subsets of the Hilbert space. Since the function field includes the ring of polynomial invariants it too gives a complete description of the entanglement.

For two qubits the ring of polynomial invariants is generated by the real valued $(1,1)$ polynomial that is the norm of the state vector and the complex valued $(2,0)$ polynomial that is the complex concurrence \cite{woot} together with its complex conjugate of bidegree $(0,2)$.
For greater number of qubits the number of generators increases and there are several with the same bidegree. Moreover, for a given generator it is often possible to add or subtract powers or products of other generators without changing the bidegree. Therefore, the set of bi-homogeneous generators is not given but a matter of choice.
Different choices gives different physical meanings to the generators and a natural way to choose is with respect to some entanglement property of interest. 

One case where such a choice must be made between capturing different properties of an entangled system is the selection of bidegree $(6,0)$ generator of the entanglement invariants of a four-qubit system.
For example, by adding or subtracting different fractions of the third power of the bidegree $(2,0)$ generator it can be chosen to distinguish between different degeneracy configurations \cite{lamata} of permutation invariant states \cite{viehmann}, or alternatively to distinguish between states exhibiting different topological phases \cite{johanssonosterloh13}.

For two or more qubits there is an uncountable number of $\mathrm{SU(2)}^{\times{n}}$ orbits. 
A simpler and more coarse grained classification of entanglement can be achieved by considering only the polynomial invariants and orbits under SLOCC. For two and three qubits the number of SLOCC orbits is finite. But for more than three qubits this gives again an uncountable number of orbits. 
This has lead to the development of different approaches to arrange the set of entangled states
into a finite number of classes.

One such approach is to consider the algebraic varieties associated with a set of invariants, i.e. the zero locus of each subset of the invariants \cite{acin,acin2,holweck}. Geometrically, an algebraic variety is similar to a manifold except that it may have singular points where the tangent space is not well defined. These varieties give a geometric structure to the set of entangled states. Since the number of generators is finite there always exist such geometric descriptions with a finite number of varieties.
However, different choices of the set of invariants leads to geometric descriptions with different physical meaning.

Another way to describe a set of entangled states is by canonical forms, i.e., a collection of basis vectors in terms of which any state in the set is expressible after a local unitary transformation \cite{acin,carteret00,acin2,kraus}, or SLOCC \cite{dur,verstraete}. 
Such a description is related to invariants if the canonical forms are chosen to describe the algebraic varieties associated with the invariants \cite{acin,acin2,holweck}.

\subsection{Symmetry groups and symmetry strata}
\label{sec2}

We say that an $n$-qubit system in state $|\psi\rangle$ has a local unitary symmetry if there exist a non-trivial group $H\subseteq\mathrm{U(1)}\times\mathrm{SU(2)}^{\times{n}}$ such that $g|\psi\rangle=|\psi\rangle$ for every $g\in{H}$. 
Let us establish some properties of the action of the local unitary group and the symmetry groups of states that will be needed in the Sect. \ref{sec3}.
The connected group $\mathrm{SU(2)}^{\times{n}}$ has no non-trivial connected Abelian normal subgroup and is therefore semisimple. A group like $\mathrm{U(1)}\times\mathrm{SU(2)}^{\times{n}}$ where the identity component is a product of a semisimple group and an algebraic torus is here called {\it reductive} to agree with the notation in Ref. \cite{lunarichardson79}. If $|\psi\rangle$ belongs to a closed orbit of the reductive group $G$ the symmetry group $H\subset{G}$ is reductive. Since $\mathrm{U(1)}\times\mathrm{SU(2)}^{\times{n}}$ is a compact group all orbits under its action are closed. In the following we let $G$ be the local unitary group. 

Next let us divide the symmetry groups into classes based on whether they can be transformed to each other by local unitary transformations.
Two symmetry groups $H$ and $H'$ are said to belong to the same symmetry class if they are conjugate, i.e., if there is a local unitary transformation $U$ such that $H'=UHU^{\dagger}$.
We denote the symmetry class that includes the group $H$ by $(H)$.
For a system of a finite number of qubits the different possible symmetries fall into a finite number of symmetry classes. States entangled in the same way now belong to the same symmetry class, but the converse is generally not true, states on different $G$ orbits can belong the same symmetry class.
In particular, the completely factorized states, i.e. the unentangled states all belong to the same symmetry class since they are on the same orbit. Hence, symmetries in any other class are exhibited only by entangled states.

The set of all states with symmetries belonging to the same symmetry class is called a symmetry stratum.
We denote such a stratum by $X_{(H)}$, where $H$ is the representative of the symmetry class.
The symmetry strata can be given a partial ordering based on inclusion in the closure of a bigger stratum. If $H_1\subseteq{H_2}$ the stratum $X_{({H_2})}$ exhibiting symmetry $H_2$ is included in the closure of the set $X_{({H_1})}$ exhibiting symmetry $H_1$. When discussing the closure of a stratum we refer to closedness in the Zariski topology. The Zariski topology is the one where closed sets are algebraic varieties, i.e., the zero locus of a family of polynomials.
We denote the closure of the symmetry stratum $X_{(H)}$ by $X^{(H)}$. It follows that the closure $X^{(H)}$ contains all states with a symmetry group that includes a subgroup conjugate to $H$.

The symmetry strata can range in size from a single local unitary orbit to an uncountable set of orbits. For any number of qubits there is a unique biggest stratum which is dense in state space called the principal stratum. For three or more qubits this stratum corresponds to the trivial symmetry group \cite{carteretlinden99}. 
Since the symmetries of completely factorized states belong to the same class all such states belong to the same stratum. Hence, all other strata contain only entangled states. The symmetry stratification for the special case of permutation invariant states has been studied in \cite{cenci1,cenci2}.

We can give a $G$-invariant description of the symmetry strata. The group quotient $X/G$ maps each symmetry stratum to a corresponding stratum in the orbit space. This stratification of the orbit space is called a Luna stratification \cite{luna73,alggeo} and each Luna stratum  $Z_{(H)}\equiv{X}_{(H)}/G$ is the set of closed $G$-orbits of states whose symmetry groups are conjugate to $H$.

The Luna strata can be given the same partial ordering as the symmetry strata. 
A Luna stratum $Z_{(H_{1})}$ is contained in the closure of a Luna stratum $Z_{(H_{2})}$ if and only if $H_{2}\subset{H_{1}}$. We denote the closure of $Z_{(H)}$ by $Z^{(H)}$. The largest stratum corresponding to the trivial symmetry group is again called he principal stratum.
It follows from the previous discussion that $Z^{(H)}$ is the set of closed $G$-orbits of states with a symmetry group that includes a subgroup conjugate to $H$.
The set $Z_{(H)}$ is a dense open subset of $Z^{(H)}$ \cite{lunarichardson79}.

The closure of each Luna stratum in the orbit space is an algebraic variety just like the closure of a symmetry stratum in the Hilbert space. Since we are in the orbit space, the closure of each stratum correspond to the vanishing of a subring of polynomials in $\mathbb{C}[X]^{G}$.
We are interested in the case where $G$ acts on $X$ such that $X$ is a rational finite-dimensional $G$-module since this is true for the action of the local unitary group on the $n$-qubit Hilbert space. In this case the closure $Z^{(H)}$ of each Luna stratum is an irreducible variety, i.e., it is not the union of two smaller varieties \cite{schwarz}. 
The closure of the principal stratum is the full Hilbert space, and thus no polynomials are zero everywhere on this stratum. If $Z^{(H_{2})}$ is in the closure of $Z^{(H_{1})}$, i.e., if $H_{1}\subset{H_{2}}$, the subring of vanishing polynomials on $Z^{(H_{1})}$ is contained in the subring of vanishing polynomials on $Z^{(H_{2})}$.

The non-vanishing polynomials on the closure $Z^{(H)}$ of a Luna stratum is the so called coordinate ring $\mathbb{C}[Z^{(H)}]$ of $Z^{(H)}$.  
A coordinate ring $\mathbb{C}[Z^{(H_{2})}]$ is contained in $\mathbb{C}[Z^{(H_{1})}]$ if $H_{1}\subset{H_{2}}$.
Note that this means that the set of orbits with symmetry class $(H_2)$ is described by a smaller number of parameters than the orbits with symmetry class $(H_1)$.
The most symmetric states, i.e., those with symmetry groups not contained in any other symmetry group, thus belong to the smallest strata of Hilbert space with relatively specific entanglement properties. Examples of this are the highly symmetric two-qubit Bell state and three-qubit GHZ state which both belong to symmetry strata containing a single local unitary orbit. 

To each symmetry group we can thus associate a subring of entanglement invariants, the coordinate ring of the closure of the symmetry stratum.
Furthermore, when there is a choice in selecting a set of bi-homogeneous polynomials from $\mathbb{C}[X]^{G}$ to describe the entanglement of the system one may choose them to reflect the symmetry stratification by selecting them from the coordinate rings of the different strata.

\subsection{Symmetry preserving operations and their invariant rings}
\label{sec3}
Given a symmetry group $H$ of an $n$-qubit system in state $\mid\psi\rangle$, we consider the subgroup of $G$ that preserves the symmetry of the system while not necessarily preserving the state. This subgroup is the normalizer group $N_{G}(H)$ of $H$ in ${G}$ which is defined as the group of all $g\in{G}$ such that $gH=Hg$, i.e., $g$ that commutes with the group $H$ as a whole. If $H$ is a closed reductive subgroup of the reductive group $G$, $N_{G}(H)$ is a reductive group \cite{lunarichardson79}.

In each symmetry stratum $X_{(H)}$ we can select the subset $X_{H}\subset{X_{(H)}}$ of states for which $H$ is the symmetry group. 
Every state in $X_{(H)}$ can by definition be brought to $X_{H}$ by a local unitary transformation. 
If two elements of $X_{H}$ are in the same closed $G$-orbit they can be transformed to each other by elements of the normalizer group $N_{G}(H)$ \cite{lunarichardson79}. This means that every $G$ orbit in $X_{(H)}$ contains a unique $N_{G}(H)$ orbit in $X_H$. 
Next let $X^{H}$ denote the set of states such that $H$ is a subgroup of the symmetry group. From the discussion in section \ref{sec2} we see that $X^{H}$ is the closure of $X_{H}$. Every closed $G$ orbit in $X^{(H)}$ meets $X^{H}$, but they do not necessarily meet in a unique $N_G(H)$ orbit. There could be more than one $N_G(H)$ orbit in a $G$ orbit of $X^{(H)}$.

To the symmetry group $H$ we can associate the ring $\mathbb{C}[X^{H}]^{N_{G}(H)}$ of polynomial functions on $X^{H}$ that 
are invariant under action of $N_{G}(H)$. In other words, these are the polynomials which are invariant under any symmetry preserving operation in $G$.
We introduce a $N_{G}(H)$ invariant description of $X_{H}$ by defining $Z_{H}\equiv{X}_{H}/N_{G}(H)$ as the set of closed $N_{G}(H)$ orbits whose symmetry group is $H$. Let $Z^{H}$ be the closure of $Z_{H}$. 
The coordinate ring on $Z^{H}$ is the ring $\mathbb{C}[Z^{H}]\equiv\mathbb{C}[X^{H}]^{N_{G}(H)}$ of $N_{G}(H)$-invariant polynomials on $X^{H}$. Since $N_{G}(H)$ is a compact group all orbits are closed and any two orbits are distinguished by $\mathbb{C}[Z^{H}]$.

Since every $G$ orbit in $X_{(H)}$ contains a unique $N_{G}(H)$ orbit, there is a morphism from $Z^H$ to
$Z^{(H)}$ which is an isomorphism from $Z_H$ to $Z_{(H)}$ but not necessarily from all of $Z^H$ to
$Z^{(H)}$. The reason why the isomorphism may not be valid on all of $Z^H$ is that a $G$ orbit in $X^{(H)}$ may contain more than one $N_{G}(H)$ orbit. A morphism which defines an isomorphism outside a lower dimensional subset is called a {\it birational} morphism and $Z^H$ is said to be {\it birationally equivalent} to
$Z^{(H)}$. Further, this birational morphism means that the coordinate ring
$\mathbb{C}[Z^H]$ which distinguishes $N_G(H)$ orbits on $X^{H}$ also distinguishes inequivalently entangled states on $X_H$ but not necessarily on all of $X^H$. Thus, $\mathbb{C}[Z^H]$ is sufficient to describe the entanglement properties of $X_H$. 
Since every state in $X_{(H)}$ is $G$ equivalent to a state in $X_H$
a description of the entanglement properties of $X_{H}$ gives a description of entanglement in $X_{(H)}$.
For a principal stratum with trivial symmetry group, the normalizer is the full group $G$. However, for a non-trivial symmetry group the normalizer $N_G(H)$ is a subgroup of $G$. 
In this case, instead of finding the ring of $G$ invariants, which can be a difficult problem, one can solve the often easier problem of finding the $N_G(H)$ invariants.

The ring $\mathbb{C}[Z^H]$ includes a subring isomorphic to $\mathbb{C}[Z^{(H)}]$, but is not isomorphic to $\mathbb{C}[Z^{(H)}]$ unless $Z^H$ is isomorphic to
$Z^{(H)}$. The existence of a birational map between irreducible varieties nevertheless implies an isomorphism between their function fields. The function field of the coordinate ring $\mathbb{C}[Z^{(H)}]$ on a stratum closure is defined as the ring of quotients of elements in $\mathbb{C}[Z^{(H)}]$ and is denoted $\mathbb{C}(Z^{(H)})$.

\begin{lemma}
Two irreducible varieties $X$ and $Y$ are birationally equivalent if and only if there is an isomorphism of the function fields $\mathbb{C}(X)\cong{\mathbb{C}(Y)}$ which is the identity on the field $\mathbb{C}$.
\end{lemma}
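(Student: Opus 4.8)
The plan is to prove both implications by exploiting the contravariant correspondence between dominant rational maps of irreducible varieties and $\mathbb{C}$-algebra homomorphisms of their function fields. The central tool is the pullback operation: a dominant rational map $\varphi$ from $X$ to $Y$, defined and dominant on a dense open subset, induces a homomorphism $\varphi^{*}\colon \mathbb{C}(Y)\to\mathbb{C}(X)$ sending a rational function $f$ to $f\circ\varphi$. Dominance guarantees that a nonzero $f$ does not pull back to the zero function, so $\varphi^{*}$ is well defined and injective, and it fixes the constants $\mathbb{C}$. First I would record that this assignment is functorial, $(\psi\circ\varphi)^{*}=\varphi^{*}\circ\psi^{*}$, and that the identity map induces the identity; these facts do all the bookkeeping in what follows.

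For the forward direction, suppose $X$ and $Y$ are birationally equivalent. By definition there are dominant rational maps $\varphi$ from $X$ to $Y$ and $\psi$ from $Y$ to $X$ whose compositions agree with the respective identity maps on dense open subsets. Applying the pullback functor turns these into $\mathbb{C}$-algebra homomorphisms $\varphi^{*}$ and $\psi^{*}$ between the function fields, and functoriality converts the relations $\psi\circ\varphi=\mathrm{id}$ and $\varphi\circ\psi=\mathrm{id}$ into $\varphi^{*}\circ\psi^{*}=\mathrm{id}$ and $\psi^{*}\circ\varphi^{*}=\mathrm{id}$. Hence $\varphi^{*}\colon\mathbb{C}(Y)\to\mathbb{C}(X)$ is an isomorphism that is the identity on $\mathbb{C}$.

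For the converse, suppose $\theta\colon\mathbb{C}(Y)\to\mathbb{C}(X)$ is a $\mathbb{C}$-algebra isomorphism fixing $\mathbb{C}$. Since birational equivalence is insensitive to passing to a dense open subset, I would realize $Y$ inside affine space through coordinate functions $y_{1},\dots,y_{m}$, which lie in $\mathbb{C}(Y)$ and generate its function field over $\mathbb{C}$. Their images $\theta(y_{1}),\dots,\theta(y_{m})$ are rational functions on $X$, regular on a common dense open subset $U\subseteq X$, and I would define a rational map by $x\mapsto(\theta(y_{1})(x),\dots,\theta(y_{m})(x))$. The key point is that this map has image in $Y$: any polynomial relation $p(y_{1},\dots,y_{m})=0$ holding on $Y$ is an identity in $\mathbb{C}(Y)$, and since $\theta$ is a $\mathbb{C}$-algebra homomorphism it carries this relation to $p(\theta(y_{1}),\dots,\theta(y_{m}))=0$ in $\mathbb{C}(X)$, so the defining equations of $Y$ are satisfied. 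Applying the same construction to $\theta^{-1}$ produces a rational map in the opposite direction, and because $\theta$ and $\theta^{-1}$ are mutually inverse the two rational maps compose to the identity on dense open subsets, giving the birational equivalence.

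The main obstacle is the bookkeeping in the converse: one must check that the map built from the $\theta(y_{i})$ is genuinely dominant, so that the reverse construction applies and the pullback it induces recovers $\theta$, and that the two rational maps, each defined only on its own dense open set, restrict to a common dense open set on which the compositions are literally the identity. Both points follow from irreducibility, since dense open subsets of an irreducible variety are again dense and any two of them meet, together with the fact that $\theta$ is surjective and hence the induced map separates enough functions to be dominant. Making this precise, rather than establishing the pullback correspondence itself, is where the real care is needed.
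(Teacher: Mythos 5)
Your proof is correct and is essentially the standard argument found in the reference the paper cites for this lemma (Cox, Little, and O'Shea); the paper itself offers no proof beyond that citation. The pullback correspondence in the forward direction and the construction of a dominant rational map from the images $\theta(y_{i})$ of affine coordinates in the converse, with dominance following from injectivity of $\theta$, is exactly the textbook route.
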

\begin{proof}
See Ref. \cite{cox}
\end{proof}

One can make a stronger claim if there is a birational morphism $\varphi:X\to{Y}$ and $Y$ is {\it normal}. A irreducible variety $Y$ is normal if its coordinate ring $\mathbb{C}[Y]$ is integrally closed, i.e. if any ring in the function field $\mathbb{C}(Y)$ which includes $\mathbb{C}[Y]$ and is finitely generated as a module over $\mathbb{C}[Y]$ is $\mathbb{C}[Y]$ itself.
In this case an isomorphism exists.

\begin{lemma}
\label{lemmq}
Let $\varphi:X\to{Y}$ be a surjective birational morphism of irreducible algebraic varieties. If $Y$ is a normal variety, $\varphi$ is an isomorphism of varieties.
\end{lemma}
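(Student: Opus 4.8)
The plan is to recast this geometric statement as a statement in commutative algebra and reduce it to the defining property of a normal variety. Since being an isomorphism is a condition that can be checked locally on the target, I would first cover $Y$ by affine opens $V=\mathrm{Spec}\,A$, set $U=\varphi^{-1}(V)=\mathrm{Spec}\,B$, and reduce to showing that the comorphism $\varphi^{*}:A\to{B}$ is a ring isomorphism on each such patch. Because $\varphi$ is dominant onto the irreducible $Y$, the map $\varphi^{*}$ is injective, so $A$ may be regarded as a subring of $B$ with both sitting inside a common function field.

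Next I would feed in the two hypotheses. Birationality says, by the previous lemma, that the induced map $\mathbb{C}(Y)\to\mathbb{C}(X)$ is an isomorphism fixing $\mathbb{C}$, so $A$ and $B$ share the same fraction field $K$. Normality of $Y$ means precisely that $A$ is integrally closed in $K$. The desired equality $A=B$ then reduces to the single claim that every element of $B$ is integral over $A$: granting this, integrality together with $B\subseteq{K}$ and the integral closedness of $A$ forces $B\subseteq{A}$, which with $A\subseteq{B}$ yields $A=B$, and hence the isomorphism of varieties.

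The main obstacle is to establish that every element of $B$ is integral over $A$, and this is where the remaining geometric input is essential. If $\varphi$ were finite this would be automatic, since $B$ would then be a finite $A$-module. For a general surjective morphism I would appeal to Zariski's Main Theorem to reduce to this situation: a quasi-finite birational morphism onto a normal variety factors as an open immersion followed by a finite map, and the finite factor, being birational onto the normal $Y$, is an isomorphism by the integral-closure argument just described. The delicate step is to verify that the hypotheses genuinely guarantee the quasi-finiteness (finite fibres) that Zariski's theorem demands — surjectivity on its own does not, as contracting morphisms show. Once that finiteness is secured from the structure of the problem, normality and integral closure complete the proof.
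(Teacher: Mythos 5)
The paper gives no argument for this lemma; it simply cites Luna (1975), so your proposal can only be judged on its own terms. Your algebraic reformulation is the right one: on affine patches the comorphism realizes $\mathbb{C}[Y]=A$ as a subring of $B=\mathbb{C}[X]$ inside the common function field $K$, and the whole content is to show $B\subseteq A$. But the route you choose --- getting $B$ integral over $A$ via finiteness and Zariski's Main Theorem --- runs into exactly the obstruction you flag, and that obstruction is not something you can ``secure from the structure of the problem'': quasi-finiteness genuinely does not follow from the stated hypotheses. Indeed, as literally stated the lemma is false: the blow-down $\mathrm{Bl}_{0}\,\mathbb{A}^{2}\to\mathbb{A}^{2}$ is a surjective birational morphism of irreducible varieties onto a normal (even smooth) target and is not an isomorphism. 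The statement needs the implicit hypothesis, supplied by the paper's application, that the varieties are affine ($Z^{H}$ and $Z^{(H)}$ are affine quotient varieties), and without it your very first reduction, $U=\varphi^{-1}(V)=\mathrm{Spec}\,B$, already breaks, since $\varphi^{-1}$ of an affine open need not be affine.

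Once affineness is in hand, the way to finish is not to manufacture quasi-finiteness but to use normality in its valuation-theoretic form. If some $f\in B$ had a pole along a height-one prime $\mathfrak{p}$ of $A$, i.e.\ $v_{\mathfrak{p}}(f)<0$, then no prime of $B$ could contract to $\mathfrak{p}$: a prime $\mathfrak{q}$ with $\mathfrak{q}\cap A=\mathfrak{p}$ would give $1/f\in\mathfrak{q}B_{\mathfrak{q}}$ together with $f\in B_{\mathfrak{q}}$, hence $1\in\mathfrak{q}B_{\mathfrak{q}}$, a contradiction. Surjectivity therefore forces $v_{\mathfrak{p}}(f)\ge 0$ for every height-one $\mathfrak{p}$, so $f\in\bigcap_{\mathrm{ht}\,\mathfrak{p}=1}A_{\mathfrak{p}}$, and this intersection equals $A$ precisely because $A$ is normal (algebraic Hartogs/Serre). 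This yields $B=A$ directly, uses surjectivity and normality exactly once each, and avoids Zariski's Main Theorem altogether. Your integral-closure observation is correct but is the right tool only in the finite case; the missing idea here is the codimension-one characterization of normality.
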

\begin{proof}
See. Ref. \cite{luna75}
\end{proof}
What lemma \ref{lemmq} implies is that if $Z^{(H)}$ is a normal variety and if $Z^H$ is irreducible, there is an isomorphism between $Z^H$ and $Z^{(H)}$ and an isomorphism the coordinate rings $\mathbb{C}[Z^{H}]$ and $\mathbb{C}[Z^{(H)}]$. This is the essence of the Luna-Richardson theorem \cite{lunarichardson79}. Adapted to our context the theorem says the following

\begin{theorem}
\label{luna}
Let $G$ be a reductive group and $H\subset G$ a symmetry group. Assume that $Z^{(H)}$ is a normal irreducible variety and that $Z^H$ is irreducible. Denote $\pi_{X^{(H)}}:X^{(H)}\to{Z^{(H)}}$ and $\pi_{X^H}:X^H\to{Z^H}$. Let 
$\chi:Z^H\to Z^{(H)}$ be the unique morphism such that such that $\chi(\pi_{X^H}(x))=\pi_{X^{(H)}}(x)$ for every $x\in X^{(H)}$. Then $\chi$ is an isomorphism of varieties and $\mathbb{C}[Z^{H}]$ is isomorphic to $\mathbb{C}[Z^{(H)}]$.
\end{theorem}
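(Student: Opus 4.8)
The plan is to reduce the statement to Lemma \ref{lemmq} by exhibiting $\chi$ as a surjective birational morphism of irreducible varieties whose target $Z^{(H)}$ is normal; the isomorphism of coordinate rings then follows by passing to comorphisms. Almost all of the geometric input has already been assembled in the discussion preceding the theorem, so the proof is primarily a matter of checking that $\chi$ meets the hypotheses of that lemma.

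First I would establish that $\chi$ is a genuine morphism rather than merely a set-theoretic map. Since $X^H \subset X^{(H)}$ and $N_G(H) \subset G$, composing the inclusion with the quotient map $\pi_{X^{(H)}}$ gives a morphism $X^H \to Z^{(H)}$ that is $G$-invariant on $X^H$ and in particular constant on each $N_G(H)$-orbit. Because $\pi_{X^H} : X^H \to Z^H$ is the categorical quotient by the reductive group $N_G(H)$, this morphism factors uniquely through $\pi_{X^H}$, yielding the unique morphism $\chi : Z^H \to Z^{(H)}$ with $\chi \circ \pi_{X^H} = \pi_{X^{(H)}}|_{X^H}$. This is exactly the characterizing property in the statement.

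Next I would verify the hypotheses of Lemma \ref{lemmq}. Irreducibility of both $Z^H$ and $Z^{(H)}$ is part of the assumptions, and $Z^{(H)}$ is assumed normal. For birationality, recall from Sect. \ref{sec3} that every $G$-orbit in $X_{(H)}$ contains a unique $N_G(H)$-orbit in $X_H$; passing to quotients, this says that $\chi$ restricts to the isomorphism $Z_H \to Z_{(H)}$. Since $Z_H$ is dense and open in $Z^H$ and $Z_{(H)}$ is dense and open in $Z^{(H)}$, the morphism $\chi$ is an isomorphism on dense open subsets and hence birational. For surjectivity I would invoke the fact, noted in Sect. \ref{sec3}, that every closed $G$-orbit in $X^{(H)}$ meets $X^H$: a point of $Z^{(H)}$ corresponds to such an orbit, which contains a point of $X^H$ whose image under $\pi_{X^H}$ is a preimage under $\chi$. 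Lemma \ref{lemmq} then applies and $\chi$ is an isomorphism of varieties.

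Finally, an isomorphism of varieties induces an isomorphism of coordinate rings through the comorphism $\chi^{*} : \mathbb{C}[Z^{(H)}] \to \mathbb{C}[Z^H]$, which yields the claimed ring isomorphism. The step carrying the real content is the passage from birational to honest isomorphism. As emphasized in Sect. \ref{sec3}, a closed $G$-orbit in $X^{(H)}$ may meet $X^H$ in \emph{more than one} $N_G(H)$-orbit, so over the boundary $Z^{(H)} \setminus Z_{(H)}$ the map $\chi$ could a priori fail to be injective, and birationality alone does not rule this out. It is precisely the normality hypothesis on $Z^{(H)}$, acting through Lemma \ref{lemmq}, that forces $\chi$ to be injective everywhere and hence an isomorphism. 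I therefore expect the main obstacle to be the verification of surjectivity together with the control of boundary orbits; this is where the slice-theoretic arguments of Ref.~\cite{lunarichardson79} are genuinely used, whereas the remaining steps are essentially formal consequences of the quotient construction.
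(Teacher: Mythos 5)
Your proposal is correct and takes essentially the same route as the paper: the discussion immediately preceding the theorem reduces the statement to Lemma \ref{lemmq} in exactly the way you do (birationality from the unique $N_G(H)$-orbit inside each closed $G$-orbit over $X_{(H)}$, surjectivity from every closed $G$-orbit in $X^{(H)}$ meeting $X^H$, normality of $Z^{(H)}$ supplying the upgrade from birational to isomorphism), while the paper's formal proof simply defers to Ref.~\cite{lunarichardson79} for the underlying slice-theoretic facts that you likewise cite rather than reprove.
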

\begin{proof}
See Ref. \cite{lunarichardson79}.
\end{proof}
The Luna-Richardson theorem tells us that if $\mathbb{C}[Z^{(H)}]$ is integrally closed it is isomorphic to $\mathbb{C}[Z^{H}]$. In this case $\mathbb{C}[Z^{H}]$ gives a description of the entanglement in all of $X^{H}$. 

The above discussion including theorem \ref{luna} was made for $G=\mathrm{U(1)}\times\mathrm{SU(2)}^{\times{n}}$, but the main points only require $G$ to be compact and reductive and thus are true for any compact reductive subgroup of the local unitary group containing $\mathrm{SU(2)}^{\times{n}}$. In section \ref{sec5b} we consider such subgroups to clarify the relation between certain symmetries and $\mathrm{SU(2)}^{\times{n}}$ invariants. 

While we have seen that there exists descriptions of the entanglement in terms of $N_G(H)$ invariants on $X_H$ and that there under some circumstances is an isomorphism between the ring of $N_G(H)$ invariants and the ring of $G$ invariants on the closure of the symmetry stratum, we have no general method for constructing theses invariants and isomorphisms.

\section{Symmetry groups diagonalizable by local unitary operations}
\label{sec5}
\subsection{Local unitary diagonalizability and canonical forms}
\label{seco}
Finding all the possible symmetries of an $n$-qubit system is in general a difficult task. However, a subset of the symmetries can still give a useful partial description.
Here we consider the case where the symmetry group $H$ can be diagonalized by local unitary operations. In the symmetry strata of such a group $H$ the set $Z^{H}$ can be easily described in terms of canonical forms. 

In particular we consider the subset of these groups that act non-trivially on all the qubits, or more precisely, symmetry groups contained in $\mathrm{U(1)}\times\mathrm{SU(2)}^{\times{n}}$ where all elements can be diagonalized by the same local unitary operations and where for each qubit subspace at least one element of the group acts differently from $\pm{1}$. 
In this case the invariants of the symmetry preserving operations $\mathbb{C}[Z^{H}]$ can be constructed from the canonical form in a relatively easy way. Thus, for locally diagonalizable symmetry groups we can directly construct the description of the entanglement outlined in Sect. \ref{sec3}.

This subset of symmetries still capture many physically interesting cases.
An example is the non-trivial symmetry groups of permutation invariant states. For these states an element of a symmetry group occurs only if all other elements related to it by permutations of the single qubit actions are also in the symmetry group.
Therefore, in this case any non-trivial symmetry group has a non-trivial action on every qubit.

Given a locally diagonalizable group $H$ our first step is to find the set of states $X^{H}$, i.e., the fix points of $H$. Therefore, we study the equations $g|\psi\rangle=|\psi\rangle$ where $g$ is an element of $H$. Since $H$ is locally diagonalizable we can choose a basis where $H$ is diagonal. A group element in such a basis is of the form $g=e^{i\phi_{1}\sigma_{z}}\times{e^{i\phi_{2}\sigma_{z}}}\times\dots\times{e^{i\phi_{n}\sigma_{z}}}\times{e^{i\theta}}$
, where $\phi_{k},\theta\in{\mathbb{R}}$, and can thus be fully described by the set $\{\phi_{k},\theta\}$.
Let us denote the basis vectors $|s_{j1}\dots{s_{jn}\rangle}$ where $s_{jk}$ is either $0$ or $1$.
The expansion of a state $|\psi\rangle$ satisfying $g|\psi\rangle=|\psi\rangle$ in these basis vectors gives

\begin{equation}|\psi\rangle=\sum_{j=1}^{L}c_{j}|s_{j1}\dots{s_{jn}\rangle},\end{equation}
for some positive integer $L$.
Each basis vector that is included in the expansion of $|\psi\rangle$ is individually preserved by the group action. Therefore, to each locally diagonalizable symmetry group $H$ we can associate the set $S(H)$ of basis vectors that are preserved by action of the group. 

For each basis vector in $S(H)$ action by a group element gives the condition $e^{i(\sum_{k=1}^{n}\phi_{k}(-1)^{s_{jk}}+\theta)}=1$.
This is equivalent to an equation in the exponents $\sum_{k=1}^{n}\phi_{k}(-1)^{s_{jk}}+\theta=2a_{j}\pi$ where $a_{j}$ is an integer.
A state in $X^{H}$ with $L$ terms in the basis expansion gives us a system of $L$ equations.

\begin{eqnarray}\label{yutt}\left(\begin{array}{ccccc}
(-1)^{s_{11}} & \cdots & (-1)^{s_{1n}} & 1\\
(-1)^{s_{21}} & \cdots & (-1)^{s_{2n}} & 1\\
\vdots & \ddots &\vdots & \vdots\\
(-1)^{s_{L1}} & \cdots & (-1)^{s_{Ln}} & 1\\
\end{array}\right)\left(\begin{array}{c}
\phi_{1}  \\
\vdots\\
\phi_{n}\\
\theta\\
\end{array}\right)\!=\!\left(\begin{array}{c}
2\pi{a}_{1}\\
2\pi{a}_{2}\\
\vdots\\
2\pi{a}_{L}\\
\end{array}\right).\nonumber\\
\end{eqnarray}
If the group $H$ is discrete and $|\psi\rangle\in{X_{H}}$ Eq. \ref{yutt} has a unique solution $\{\phi_{k},\theta\}$ for a given choice of integers $a_{j}$. In this case there is $n+1$ linearly independent rows on the left hand side. While $S(H)$ for a discrete group $H$ contains at least $n+1$ elements it may contain more. There are thus in general several sets of $n+1$ linearly independent rows which each uniquely define the group $H$ through the $n+1$ linearly independent solutions of Eq. \ref{yutt} for different choices of $a_{j}$. 
A combinatorial algorithm to find such linearly independent sets corresponding to this kind of discrete symmetry groups was described in \cite{johansson12}.

If there are less then $n+1$ linearly independent rows in the left hand side of Eq. \ref{yutt} the $\{\phi_{k},\theta\}$ are dependent variables and there is a continuous set of solutions. Thus, in this case $H$ is a continuous group. As for the case of discrete groups there may be several sets of linearly independent rows which each uniquely define the same group $H$.
An important distinction to be made is whether $\theta$ is a discrete or continuous variable. If $\theta$ is discrete $H\subset{D}\times{\mathrm{SU(2)}}^{\times{n}}$, where $D$ is some discrete cyclic subgroup of ${\mathrm{U(1)}}$. If $\theta$ is continuous $H$ is not contained in such a group and but only in ${\mathrm{U(1)}}\times{\mathrm{SU(2)}}^{\times{n}}$.
Moreover, if $\theta$ is continuous all invariants in $\mathbb{C}[X^{(H)}]^{G}$ have bidegrees of the type $(a,a)$ for $a\in{\mathbb{N}}$ \cite{johansson12,johanssonosterloh13}.

By definition every state in $X_{H}$ is in $span(S(H))$ and every state for which $H$ is a subgroup of the symmetry group is also in $span(S(H))$. Therefore, if $X_{H}$ is non-empty and dense in $span(S(H))$ it follows that $span(S(H))=X^{H}$. So far we have considered locally diagonalizable symmetry groups in generality but we now make a restriction to the groups with non-trivial action on all qubits.

\begin{lemma}
Consider a symmetry group $H$ which is diagonalized by local unitary operations.
If two basis vectors in the set $S(H)$ associated with $H$ differ from each other in only the $k$th entry corresponding to the $k$th qubit it follows that each element of the symmetry group $H$ must act trivially on the $k$th qubit. 
\end{lemma}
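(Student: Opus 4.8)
The plan is to work directly with the exponent equations already set up for the diagonalized form of the group elements. Writing an arbitrary $g\in H$ in the chosen diagonal basis as $g = e^{i\phi_1\sigma_z}\times\cdots\times e^{i\phi_n\sigma_z}\times e^{i\theta}$, membership of a basis vector $|s_{j1}\dots s_{jn}\rangle$ in $S(H)$ is equivalent to the condition $\sum_{l=1}^{n}\phi_l(-1)^{s_{jl}}+\theta = 2\pi a_j$ with $a_j\in\mathbb{Z}$, and this must hold for every element $g$ of the group. So the first thing I would do is write down this condition for each of the two basis vectors, using that both lie in $S(H)$.

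Let the two vectors be $|s_1\dots s_n\rangle$ and $|s'_1\dots s'_n\rangle$, agreeing in every entry except the $k$th, so that $s'_l = s_l$ for $l\neq k$ and $s'_k = 1-s_k$. For a fixed but arbitrary $g\in H$ each satisfies its own exponent equation, and the key step is to subtract them. Because the two vectors coincide away from qubit $k$, every term with $l\neq k$ cancels and so does the global phase $\theta$, leaving only $\phi_k\,[(-1)^{s_k}-(-1)^{s'_k}] = 2\pi(a-a')$ for integers $a,a'$. Using $s'_k = 1-s_k$, so that $(-1)^{s'_k} = -(-1)^{s_k}$, the bracket collapses to $2(-1)^{s_k}$, and one concludes $\phi_k = \pi(-1)^{s_k}(a-a')$; that is, $\phi_k\in\pi\mathbb{Z}$.

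The last step is to translate this constraint on the parameter back into a statement about the single-qubit action. Since $e^{im\pi\sigma_z} = (-1)^m I$ for integer $m$, the factor acting on qubit $k$ is $\pm I$, i.e. the element acts as $\pm 1$ on the $k$th qubit. Because $g$ was an arbitrary member of $H$, this holds for the parameter $\phi_k$ of every group element, so every element of $H$ acts as $\pm 1$ on qubit $k$, which is the asserted trivial action.

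The one point I expect to need care is the meaning of ``trivially'': the computation forces the action to be $\pm I$ rather than strictly the identity, and I would note --- consistent with the convention stated just before the lemma, where a \emph{non-trivial} action on a qubit means at least one element acting differently from $\pm 1$ --- that the scalar $-1$ is absorbable into the global $\mathrm{U(1)}$ phase and hence counts as trivial single-qubit action. Apart from this interpretive remark, the argument is just the short cancellation above; there is no genuinely hard step.
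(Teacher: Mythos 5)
Your proof is correct and follows essentially the same route as the paper: the paper takes the ratio of the two phase factors $e^{i(\sum_l\phi_l(-1)^{s_{jl}}+\theta)}$ to obtain $e^{i2\phi_k}=1$, which is exactly your additive subtraction of the exponent equations. Your closing remark on $\pm 1$ counting as trivial action matches the paper's convention stated just before the lemma.
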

\begin{proof}
Consider the equation $e^{i(\sum_{k=1}^{n}\phi_{k}(-1)^{s_{jk}}+\theta)}=1$ for the phase factor resulting from action of $h\in{H}$ on a vector in $S(H)$. If two vectors differ only in the $k$th entry the ratio of their phase factors is $e^{i2\phi_{k}}$. Since both vectors are in $S(H)$ it follows that $e^{i2\phi_{k}}=1$ and thus $e^{i\phi_{k}}=\pm{1}$. 
\end{proof}
For a group that acts non-trivially on every qubit if follows that $X_{H}$ is non-empty and dense in $span(S(H))$.

\begin{theorem}
\label{kullo}
Assume that $H$ is the largest group which can be diagonalized by local unitary operations and is such that $H|\psi\rangle=|\psi\rangle$ for every $|\psi\rangle\in{span(S(H))}$. Assume that the action of $H$ is non-trivial on every qubit.  Then for almost all states in $span(S(H))$, $H$ is the full symmetry group. 

\end{theorem}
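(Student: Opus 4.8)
The plan is to show that for a generic state in $span(S(H))$ the stabiliser in $G=\mathrm{U(1)}\times\mathrm{SU(2)}^{\times n}$ is no larger than $H$; since $H$ already fixes every vector of $span(S(H))$ by construction, this yields equality. Write a generic state as $|\psi\rangle=\sum_{j}c_{j}|s_{j}\rangle$ with all $c_{j}\neq 0$, and let $g=e^{i\theta}U_{1}\otimes\cdots\otimes U_{n}$ be an arbitrary element of its stabiliser. The central step is to prove that every $U_{k}$ must be diagonal in the computational basis in which $H$ was diagonalised.

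To do this I would pass to the single-qubit reduced density matrices. From $g|\psi\rangle=|\psi\rangle$ one obtains $U_{k}\rho_{k}U_{k}^{\dagger}=\rho_{k}$, where $\rho_{k}$ is the reduced state of qubit $k$; the global phase $e^{i\theta}$ and the factors $U_{j}$ with $j\neq k$ cancel under the partial trace. The key observation is that $\rho_{k}$ is diagonal: its off-diagonal entry collects the products $c_{j}c_{j'}^{*}$ over pairs of basis vectors in $S(H)$ that agree on all qubits except the $k$th, and by the preceding lemma the non-trivial action of $H$ on qubit $k$ forbids any such pair. Hence $\rho_{k}=\mathrm{diag}(p_{0},p_{1})$ with $p_{0}=\sum_{s_{jk}=0}|c_{j}|^{2}$ and $p_{1}=\sum_{s_{jk}=1}|c_{j}|^{2}$. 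For generic moduli $|c_{j}|$ these eigenvalues are distinct, so the only $\mathrm{SU(2)}$ matrices commuting with $\rho_{k}$ are the diagonal ones $e^{i\phi_{k}\sigma_{z}}$, and therefore $U_{k}$ is diagonal.

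With every $U_{k}$ diagonal, $g$ is itself diagonal in the computational basis and acts on each basis vector $|s_{j}\rangle$ by a phase $e^{i\lambda_{j}}$. Since $g|\psi\rangle=|\psi\rangle$ and all $c_{j}\neq 0$, each phase must satisfy $e^{i\lambda_{j}}=1$, so $g$ fixes every vector of $S(H)$ and hence all of $span(S(H))$. But $g$ is then a diagonal element fixing $span(S(H))$, and by the maximality hypothesis $H$ already contains every such element, so $g\in H$. This shows the stabiliser is contained in, and hence equal to, $H$. The phrase ``almost all'' is made precise by noting that the exceptional states, namely those with some $c_{j}=0$ or with $p_{0}=p_{1}$ for some $k$, lie in a finite union of lower-dimensional subsets of $span(S(H))$ and thus form a set of measure zero.

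I expect the main obstacle to be precisely the reduction forcing each $U_{k}$ to be diagonal, i.e.\ controlling the possibility of ``extra'' non-diagonalisable symmetry. This is where both hypotheses enter: the non-trivial action on every qubit, through the preceding lemma, is what kills the off-diagonal terms of $\rho_{k}$, while genericity is what separates its eigenvalues. The remaining steps, from a diagonal $g$ to its fixing each basis vector and then to membership in $H$ via maximality, are essentially bookkeeping.
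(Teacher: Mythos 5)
Your proof is correct and is essentially the paper's argument in reduced-density-matrix language: the diagonality of $\rho_k$ is the paper's orthogonality $\langle\theta|\varphi\rangle=0$ obtained from the preceding lemma, and your genericity condition $p_0\neq p_1$ is exactly the paper's $\langle\theta|\theta\rangle\neq\langle\varphi|\varphi\rangle$ that forces $\beta=0$. Your version is slightly more explicit than the paper's in spelling out the final bookkeeping (a diagonal stabilizer element must fix every basis vector when all $c_j\neq0$ and hence lies in $H$ by maximality), but the route is the same.
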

\begin{proof}
Assume that $|\psi\rangle$ is a state for which the group $H$ satisfies $h|\psi\rangle=|\psi\rangle$ for every $h\in{H}$. 
Then let $U\notin{H}$ be such that $U|\psi\rangle=|\psi\rangle$. We consider $U$ in the basis for which $H$ is diagonal and write it on a form where its action on the first qubit is distinguished from its action $\tilde{U}$ on the remaining qubits

\begin{equation}U= \left( \begin{array}{cc}
\!\!\alpha & \!\beta \\
\!\!-\beta^* &\! \alpha^*  \end{array} \!\right)\times{\tilde{U}}.\end{equation}
The state vector $|\psi\rangle$ in the same basis is expanded in terms of the basis vectors $S(H)$. The expansion can be divided into the collection of terms $|0\rangle\otimes|\theta\rangle$ for which the state of the first qubit is $|0\rangle$ and the collection of terms $|1\rangle\otimes|\varphi\rangle$ with the state of the first qubit is $|1\rangle$. 
\begin{equation}|\psi\rangle=|0\rangle\otimes|\theta\rangle+|1\rangle\otimes|\varphi\rangle
\end{equation}
Here $\langle\theta|\theta\rangle$ and $\langle\varphi|\varphi\rangle$ are assumed to be non-zero. If $\langle\varphi|\varphi\rangle=0$ or $\langle\theta|\theta\rangle=0$ elements of the symmetry group cannot have a non-diagonal component acting on the first qubit.

Since $U$ must satisfy $U|\psi\rangle=|\psi\rangle$ it follows specifically that

\begin{equation}|0\rangle\otimes|\theta\rangle=\alpha|0\rangle\otimes\tilde{U}|\theta\rangle+\beta|0\rangle\otimes\tilde{U}|\varphi\rangle
\end{equation}
The assumption of non-trivial action implies that $\langle\theta|\varphi\rangle=0$.
Taking the norm of both sides therefore gives $|\beta|^2\langle\theta|\theta\rangle=|\beta|^2\langle\varphi|\varphi\rangle$. This is possible for non-zero $\beta$ only if $\langle\theta|\theta\rangle=\langle\varphi|\varphi\rangle$. This condition on the form of $|\psi\rangle$ is satisfied only by a subset of the states in $span(S(H))$ of measure zero. 
Hence, for almost all states in $span(S(H))$, $H$ is the full symmetry group.

\end{proof}

The proof of Theorem \ref{kullo} implies the following corollary.

\begin{corollary}
Let a state $|\psi\rangle$ have a symmetry group $H$ that is not diagonalizable by local unitary operations but contains a subgroup $H'$ that is diagonalizable by local unitary transformations and which has a non-trivial action on every qubit. Then $|\psi\rangle$ has at least one maximally mixed one-qubit reduced density matrix.
\end{corollary}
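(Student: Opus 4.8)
The plan is to reduce the statement to the computation already performed in the proof of Theorem \ref{kullo}. Since $H'$ is diagonalizable by local unitary operations, I would first pass to the local unitary basis in which $H'$ is diagonal. Because $|\psi\rangle$ is fixed by $H'$, it lies in $span(S(H'))$ and so expands over the basis vectors preserved by $H'$. The crucial point is that $H$ cannot be diagonal in this basis: if every element of $H$ were diagonal here, then the same local unitary basis change would diagonalize all of $H$, contradicting the hypothesis that $H$ is not locally diagonalizable. Hence there exists at least one $U\in H$ that is non-diagonal in the $H'$-adapted basis, and since $U\in\mathrm{U(1)}\times\mathrm{SU(2)}^{\times{n}}$ such a non-diagonal element must act with a nonzero off-diagonal component on some qubit. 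After relabeling I would take this to be the first qubit.

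Next I would run the argument of Theorem \ref{kullo} verbatim for this particular $U$ and this qubit. Writing $U=\left(\begin{array}{cc}\alpha & \beta\\ -\beta^* & \alpha^*\end{array}\right)\times\tilde{U}$ with $\beta\neq0$ and $|\psi\rangle=|0\rangle\otimes|\theta\rangle+|1\rangle\otimes|\varphi\rangle$, the constraint $U|\psi\rangle=|\psi\rangle$ yields, from its $|0\rangle$ component and the unitarity of $\tilde{U}$, the equality $|\beta|^2\langle\theta|\theta\rangle=|\beta|^2\langle\varphi|\varphi\rangle$, hence $\langle\theta|\theta\rangle=\langle\varphi|\varphi\rangle$. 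The orthogonality $\langle\theta|\varphi\rangle=0$ is obtained separately from the lemma preceding Theorem \ref{kullo} applied to $H'$: a nonzero $\langle\theta|\varphi\rangle$ would require two vectors in $S(H')$ differing only in the first qubit, which would force $H'$ to act trivially on that qubit and contradict the non-trivial action of $H'$ on every qubit.

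Finally I would compute the one-qubit reduced density matrix of the first qubit. From $|\psi\rangle=|0\rangle\otimes|\theta\rangle+|1\rangle\otimes|\varphi\rangle$ one obtains $\rho_1=\left(\begin{array}{cc}\langle\theta|\theta\rangle & \langle\varphi|\theta\rangle\\ \langle\theta|\varphi\rangle & \langle\varphi|\varphi\rangle\end{array}\right)$, and substituting $\langle\theta|\varphi\rangle=0$ together with $\langle\theta|\theta\rangle=\langle\varphi|\varphi\rangle$ and the normalization $\langle\theta|\theta\rangle+\langle\varphi|\varphi\rangle=1$ gives $\rho_1=\frac{1}{2}I$, i.e. the maximally mixed state, which proves the claim.

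The step I expect to be the main obstacle is establishing the existence of the off-diagonal $U$: one must argue that ``$H$ not locally diagonalizable'' forces a non-diagonal element precisely in the $H'$-adapted basis rather than in some unrelated basis, and that its off-diagonal action falls on a qubit on which $H'$ acts non-trivially. The hypothesis that $H'$ acts non-trivially on every qubit is exactly what resolves this, since then any qubit carries the orthogonality $\langle\theta|\varphi\rangle=0$ needed above; the remaining steps are the bookkeeping already contained in Theorem \ref{kullo}.
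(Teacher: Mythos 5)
Your proposal is correct and follows essentially the same route as the paper: the paper's own proof is a one-sentence appeal to the computation inside Theorem \ref{kullo}, observing that the conditions $\langle\theta|\varphi\rangle=0$ and $\langle\theta|\theta\rangle=\langle\varphi|\varphi\rangle$ forced by a non-diagonal symmetry element on the $k$th qubit are exactly the statement that the $k$th reduced density matrix is maximally mixed. You merely make explicit the steps the paper leaves implicit (existence of a non-diagonal element of $H$ in the $H'$-adapted basis, and the final evaluation of $\rho_1=\tfrac{1}{2}I$), and these details are all sound.
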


\begin{proof}
The necessary requirement for a symmetry group which is not locally diagonalizable for the $k$th qubit subspace within the stratum of locally diagonalizable symmetry group with non-trivial action on every qubit in Theorem \ref{kullo} is equivalent to the reduced density matrix of the $k$th qubit being maximally mixed.
\end{proof}

From Theorem \ref{kullo} we see that if $H$ acts non-trivially on each qubit it follows that $X^{H}=span(S(H))$ and $X_{H}$ is dense in $span(S(H))$.
Since each state in $X^{(H)}$ can be transformed by local unitary operations to $X^{H}$ it follows that $S(H)$ corresponds to a canonical form for all of $X^{(H)}$.
However, as described above there are in general proper subsets $s_{i}(H)\subset{S(H)}$ such that $H$ is the maximal locally diagonalizable subgroup of the symmetry group of each state expanded in the vectors of $s_{i}(H)$. In this case there are subsets of $X^{(H)}$ for which $s_{i}(H)$ corresponds to a canonical form.
We define the following set of states
\begin{eqnarray}\mathcal{S}_{i}(H)\equiv\left\{|\psi\rangle=\sum_{|j\rangle\in{s_{i}(H)}}c_{j}|j\rangle \phantom{u}| \phantom{u} c_{j}\neq{0}\phantom{u} \forall{j}\right\}.\end{eqnarray}

If $s_{i}(H)$ is such that for the states of $\mathcal{S}_{i}(H)$ the rows in the left hand side of Eq. \ref{yutt} are linearly independent and $\theta$ is uniquely defined for any choice of right hand side but no subset of $s_{i}(H)$ has the same property, the states of $\mathcal{S}_{i}(H)$ are {\it irreducibly balanced} in the sense of Ref. \cite{OS09}. The entanglement of such states has been extensively studied in Refs. \cite{OS05,OS06,djoko,OS09,johanssonosterloh13}. 
If $H$ is a discrete locally diagonalizable group, then each $s_{i}(H)$ has at least $n+1$ elements.

While the sets $s_{i}(H)$ correspond to canonical forms on $X^{(H)}$ they also correspond to weight polytopes of the maximal Abelian subgroup of ${\mathrm{SU(2)}}^{\times{n}}$ which is diagonal in the same basis as $H$.
To see this, consider the $j$th row of the left hand side of Eq. \ref{yutt} and exclude the entry in the last column.
This part of the row can be viewed as a vector $v_{j}$ in $\mathbb{R}^{n}$. In this way each basis vector in $s_{i}(H)$ corresponds to such a vector in $\mathbb{R}^{n}$.
As described in \cite{johansson12} the vectors $v_{j}$ are the weight vectors of the maximal Abelian subgroup of ${\mathrm{SU(2)}}^{\times{n}}$ which is diagonal in the chosen basis. In other words, the entries of each vector are the eigenvalues of the action of a set of generators of the group. As such they describe the infinitesimal action of the group. The weight polytope is the polytope in $\mathbb{R}^{n}$ spanned by the collection of these vectors.

The solutions to Eq. \ref{yutt} can be found through a Gaussian elimination of the left hand side to row echelon form. If $H\subset{D}\times{\mathrm{SU(2)}}^{\times{n}}$, where $D$ is some discrete cyclic subgroup of ${\mathrm{U(1)}}$, i.e., if $\theta$ takes only a discrete set of values, there are integers $z_{j}\in{\mathbb{Z}}$ corresponding to this Gaussian elimination such that 

\begin{equation}\label{gnuu}\sum_{j}z_{j}v_{j}=0.\end{equation}
For the local SLOCC semistable irreducibly balanced states, all the $z_{j}$ can be chosen positive \cite{mumford,OS09}.  
This means that the convex hull of the weight polytope contains the origin in $\mathbb{R}^{n}$.
The irreducibly balanced states for which all $z_j$ are positive correspond to polytopes where none of the vectors can be removed without reducing the polytope to one which does not contain the origin in its convex hull.
Thus, these irreducibly balanced states are both minimal non-redundant sets of basis vectors which are SLOCC semistable \cite{OS09} and minimal non-redundant sets of vectors which define locally diagonalizable subgroups of $D\times{\mathrm {SU(2)}}^{\times{n}}$ where $|D|=\sum{z_j}$. The irreducible balancedness implies that these groups are not subgroups of any locally diagonalizable group contained in $D\times{\mathrm {SU(2)}}^{\times{n}}$ where $|D|$ is finite.
If a set of vectors $S(H)$ which is not irreducibly balanced but contain irreducibly balanced sets defines a locally diagonalizable group $H$, this group is a subgroup
of the locally diagonalizable groups defined by the irreducibly balanced sets contained in $S(H)$. 
The connection between irreducibly balanced states and SLOCC invariants has been described in \cite{OS09}, as a part of the invariant-comb approach for constructing polynomial entanglement invariants \cite{OS05,OS06,OS09,djoko}.
The relation between such irreducibly balanced states and locally diagonalizable symmetry groups has been used in the study of topological phases \cite{johansson12,johanssonosterloh13}.

If for an irreducibly balanced state $z_{j}$ must be chosen from both positive and negative integers, the state is SLOCC unstable and the weight polytope does not contain the origin in its convex hull. It is however contained in the affine hull \cite{johansson12,johanssonosterloh13}. Such irreducibly balanced states have been termed affinely balanced in \cite{johansson12}. These also define locally diagonalizable groups in $D\times{\mathrm {SU(2)}}^{\times{n}}$ where $|D|=\sum{z_j}$ which are not subgroups of any other locally diagonalizable group in $D\times{\mathrm {SU(2)}}^{\times{n}}$ where $|D|$ is finite.

\subsection{Invariants under symmetry preserving operations}
\label{sec5b}
We have seen how a locally diagonalizable symmetry group $H$ with non-trivial action on all qubits determines a canonical form on its symmetry stratum and that this canonical form gives direct information about the behaviour of the state under SLOCC operations. This canonical form in turn determines the algebra of entanglement invariants on $X_H$.
To see how we must consider the group of symmetry preserving operations $N_{G}(H)$ as described in Sect. \ref{sec3}.

\begin{theorem}
\label{ullko}
The normalizer $N_{G}(H)$ of a locally diagonalizable symmetry group $H$ which does not act trivially on any qubit is the maximal Abelian subgroup of $G$, which contains $H$ and is diagonalizable by local unitary operations, in product with a finite group of spin flips.
\end{theorem}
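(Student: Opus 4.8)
The plan is to fix the local unitary basis of Sect.~\ref{seco} in which every element of $H$ is diagonal, so that a general element of $G=\mathrm{U(1)}\times\mathrm{SU(2)}^{\times n}$ is written $g=e^{i\theta}\,U_{1}\otimes\cdots\otimes U_{n}$ with $U_{k}\in\mathrm{SU(2)}$, while elements of $H$ take the form $e^{i\theta}\bigotimes_{k}e^{i\phi_{k}\sigma_{z}}$. Let $T$ denote the set of all diagonal elements $e^{i\theta}\bigotimes_{k}e^{i\phi_{k}\sigma_{z}}$ with $\phi_{k},\theta\in\mathbb{R}$ arbitrary; this is the maximal Abelian subgroup (maximal torus) of $G$ diagonalised in this basis, and it contains $H$. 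I would first dispose of the easy inclusion. Since $T$ consists of diagonal matrices it is Abelian, and every element of $T$ commutes with every element of $H$, so $T$ even centralises $H$ and hence $T\subseteq N_{G}(H)$. A spin flip on qubit $k$, i.e.\ conjugation by the anti-diagonal element $J=i\sigma_{y}$ acting on that qubit, sends $e^{i\phi_{k}\sigma_{z}}\mapsto e^{-i\phi_{k}\sigma_{z}}$ because $\sigma_{y}\sigma_{z}\sigma_{y}=-\sigma_{z}$. The sign-flip patterns that map the set $H$ to itself form a subgroup $F$ of the group $(\mathbb{Z}/2)^{n}$ of all patterns, hence a finite group, and by construction $F$ normalises $H$. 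This already gives $T\cdot F\subseteq N_{G}(H)$.

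For the reverse inclusion I would take an arbitrary $g=e^{i\theta}\,U_{1}\otimes\cdots\otimes U_{n}\in N_{G}(H)$ and exploit the non-trivial-action hypothesis. For each qubit $k$ it supplies an element $h_{k}\in H$ whose single-qubit action on qubit $k$ is $\mathrm{diag}(e^{i\phi_{k}},e^{-i\phi_{k}})$ with $e^{i\phi_{k}}\neq\pm1$, so that this $2\times2$ block has two distinct eigenvalues. Since $g$ normalises $H$, the element $gh_{k}g^{-1}$ lies in $H$ and is therefore diagonal in the chosen basis. A tensor product of invertible single-qubit operators is diagonal only if each factor is diagonal, so in particular the $k$th factor $U_{k}\,\mathrm{diag}(e^{i\phi_{k}},e^{-i\phi_{k}})\,U_{k}^{-1}$ must be diagonal. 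As conjugation preserves the spectrum, this forces $U_{k}$ to permute the two one-dimensional eigenspaces of the distinct-eigenvalue block: it either fixes them, and is diagonal, or interchanges them, and is anti-diagonal. Running this over all $k$ shows that every single-qubit factor of $g$ is diagonal or anti-diagonal.

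To finish, I would write each anti-diagonal $U_{k}$ as $J$ times a diagonal $\mathrm{SU(2)}$ matrix and each diagonal $U_{k}$ as itself, and then use the mixed-product rule $(A_{1}\otimes A_{2})(B_{1}\otimes B_{2})=(A_{1}B_{1})\otimes(A_{2}B_{2})$ to regroup $g$ as a fixed spin-flip pattern $\bigotimes_{k}F_{k}$, with $F_{k}\in\{I,J\}$, times a diagonal element of $T$ (into which the overall phase $e^{i\theta}$ is absorbed). Because $T$ centralises $H$, the element $g$ normalises $H$ if and only if its spin-flip part does, so the pattern $\bigotimes_{k}F_{k}$ lies in the finite group $F$ identified above. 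Hence $N_{G}(H)\subseteq T\cdot F$, and combined with the easy inclusion this yields $N_{G}(H)=T\cdot F$: the maximal locally diagonalizable Abelian subgroup containing $H$, in (semidirect) product with the finite group of symmetry-preserving spin flips. A remark that $T\cap F=\{I\}$, since a nontrivial pattern of $J$'s is never diagonal, confirms the product structure.

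The main obstacle is the rigidification step of the second paragraph, and it is precisely here that the non-trivial-action hypothesis is indispensable: without a distinct-eigenvalue block on some qubit $k$, the conjugate $U_{k}(\pm I)U_{k}^{-1}$ stays diagonal for a continuum of $U_{k}$, and the normaliser would be strictly larger than $T\cdot F$. The remaining bookkeeping — verifying that $T$ really is the maximal locally diagonalizable Abelian subgroup containing $H$ and that the admissible spin-flip patterns close up into a group — is routine.
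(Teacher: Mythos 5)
Your proof is correct and follows essentially the same route as the paper's: both work in the basis diagonalizing $H$, use the non-trivial action on each qubit to force every single-qubit factor of a normalizer element to be either diagonal or anti-diagonal, and then identify the anti-diagonal parts with a finite group of spin flips normalizing $H$. The only difference is cosmetic --- you argue via preservation of the spectrum of $U_{k}h_{k}U_{k}^{-1}$ and the fact that a tensor product of invertible factors is diagonal only if each factor is, whereas the paper writes out the $2\times 2$ commutation relation $f_{k}h_{k}=h'_{k}f_{k}$ explicitly --- and your two-sided-inclusion bookkeeping is, if anything, slightly more careful.
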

\begin{proof}

Let $f=e^{-i\varphi}\times{}f_{1}\times{f_{2}}\times\dots\times{f_{n}}\in{N_{G}(H)}$, where $f_i\in{\mathrm{SU(2)}}$.
By definition $fH=Hf$, i.e., for every element\\ $h=e^{-i\theta}\times{}h_{1}\times{h_{2}}\times\dots\times{h_{n}}\in{H}$ there is an element\\ $h'=e^{-i\theta'}\times{}{h'}_{1}\times{{h'}_{2}}\times\dots\times{{h'}_{n}}\in{H}$
such that $f_{k}h_{k}={h'_{k}f_{k}}$ up to a factor $\pm 1$ for every $k$. The factor $\pm 1$ can be absorbed in $\theta'$.
Let $h$ be an element with non-trivial action on the $k$th qubit and let\\ $f_{k}\!=\!\left( \begin{array}{cc}
\!\!\alpha & \!\beta \\
\!\!-\beta^* & \!\alpha^*  \end{array} \!\right)$, $h_{k}\!=\!\left( \begin{array}{cc}
\!e^{i\phi_{k}} & \!0 \\
\!0 & \!e^{-i\phi_{k}}  \end{array} \!\!\right)$, $h'_{k}\!=\!\left( \begin{array}{cc}
\!e^{i\phi_{k}'} & \!0 \\
\!0 & \!e^{-i\phi_{k}'}  \end{array} \!\!\right)$.\\
Then the condition $f_{k}h_{k}=h'_{k}f_{k}$ is equivalent to 
\begin{equation}\left( \begin{array}{cc}
\!\!\alpha & \!\beta \\
\!\!-\beta^* & \!\alpha^*  \end{array} \!\right){\!\!\!}\left( \begin{array}{cc}
\! e^{i\phi_{k}} & \! 0 \\
\! 0 & \! e^{-i\phi_{k}}  \end{array} \!\!\right){\!\!}={\!}\!\left( \begin{array}{cc}
\!e^{i\phi_{k}'} & \!0 \\
\!0 & \!e^{-i\phi_{k}'}  \end{array} \!\!\right){\!\!\!}\left( \begin{array}{cc}
\!\!\alpha & \!\beta \\
\!\!-\beta^* & \!\alpha^*  \end{array} \!\right).\end{equation}
This implies that
$\alpha e^{i\phi_{k}}=\alpha e^{i{\phi}_{k}'}$ and $\beta e^{-i\phi_{k}}=\beta e^{i{\phi}_{k}'}$ must be satisfied which is possible for non-zero $\alpha$ and non-zero $\beta$
only if $\phi_{k}=0$ or $\pm\pi$. However this contradicts the assumption that $h$ acts non-trivially on the $k$th qubit.
Hence, either $\alpha$ or $\beta$ must be zero. Any matrix for which $\beta=0$ commutes with $H$.

If $\alpha=0$ it follows that $e^{-i\phi_{k}}=e^{i{\phi}_{k}'}$, i.e., $h_{k}={h'}_{k}^*$. 
By assumption $\phi_{k}\neq m\pi$ for integer $m$. Thus, $\alpha=0$ implies that there is a subset $Z$ of the qubits such that for every $h\in{H}$ there is an $h'\in{H}$ where $h_{k}={h'}_{k}^*$ for each $k\in{Z}$ and $\sum_{k\in{Z}}(-1)^{s_{jk}}\phi_k=m\pi$ for each $j$. Since $h_{k}={h'}_{k}^*=\sigma_{x}{h'}_{k}\sigma_{x}$, this implies that for every vector $v$ in $S(H)$ there is a another vector in $S(H)$ related to $v$ by action of $\sigma_{x}$ on each of the qubits in the subset $Z$.
Then the operations of simultaneous $\sigma_{x}$ on any such subset of qubits is included in the normalizer.
Thus, $N_{G}(H)$ is the product of the maximal Abelian group containing $H$ that is diagonalizable by local unitary operations and a finite group of spin flips.

\end{proof}
Theorem \ref{ullko} establishes the general form of $N_{G}(H)$ for locally diagonalizable $H$ with non-trivial action on every qubit.
The polynomials in $\mathbb{C}[X^H]^{N_{G}(H)}$
 are at minimum invariant under the action of the maximal Abelian subgroup of $G$ containing $H$. Therefore, any such polynomial is an algebraic combination of monomials invariant under the maximal Abelian subgroup of $G$. 
We can now consider the rings $\mathbb{C}[X^H]^{N_G(H)}$ for different choices of $G$.

If $G={D\times\mathrm{SU(2)}}^{\times{n}}$, where $D$ is a cyclic subgroup of ${\mathrm{U(1)}}$ of order $|D|$, the symmetry groups included in $G$ are those with elements, given as $\{\phi_{k},\theta\}$, for which $\theta$ are multiples of $\frac{2\pi}{|D|}$.
As described in Sect. \ref{seco} the symmetry groups of this kind are those for which there are $s_{i}(H)\subset{S(H)}$ corresponding to a $\mathcal{S}_{i}(H)$ of irreducibly balanced states. For each such $s_i(H)$ there is a unique lowest degree monomial, up to complex conjugation, of the form

\begin{eqnarray}m_{i}\equiv{\prod_{j=1}^{L}c_{j}^{\frac{1}{2}(|z_{j}|+z_{j})}c_{j}^{*\frac{1}{2}(|z_{j}|-z_{j})}},\end{eqnarray}
where $z_{j}$ is the integer multiplying the vector $v_{j}$ in Eq. \ref{gnuu} for the set of vectors $v_{j}$ in $\mathbb{R}^{n}$ associated with $s_{i}(H)$, and $c_{j}$ is the coefficient of the corresponding basis vector in Hilbert space \cite{OS09,johanssonosterloh13}. 
Such an $m_i$ is invariant under the maximal locally diagonalizable Abelian subgroup containing $H$ of ${D\times\mathrm{SU(2)}}^{\times{n}}$, with $|D|=\sum{z_j}$. This follows since Eq. \ref{gnuu} is a condition for invariance of a monomial under the infinitesimal action of this maximal Abelian subgroup. More precisely, the conditions for invariance under action of the generators of the maximal Abelian subgroup is a set of linear equations which can be expressed as a sum of weight vectors $\sum_{j=1}^Lz_jv_j=0$ where each coefficient $c_j$ or $c_j^*$ in $m_i$ correspond to a term in the sum. The complex conjugated coefficients in $m_i$ correspond to the negative $z_j$ since these are the coefficients of the state related by a universal spin flip operation to the original state \cite{buzek99,johanssonosterloh13}.
We call this kind of monomials {\it irreducibly balanced}. The ring of invariants of the maximal Abelian subgroup of ${D\times\mathrm{SU(2)}}^{\times{n}}$ is generated by the irreducibly balanced monomials together with the monomials $|c_{j}|^{2}$.

The relation between the irreducibly balanced sets of vectors and locally diagonalizable subgroups of ${D\times\mathrm{SU(2)}}^{\times{n}}$ for discrete cyclic $D$ described in Sect. \ref{seco} thus carries over to the irreducibly balanced monomials. Each irreducibly balanced monomial $m_i$ uniquely defines $H$ through $s_i(H)$. Conversely, each locally diagonalizable symmetry group $H$ for which $\theta$ takes discrete values uniquely defines a set of irreducibly balanced monomials.

If $N_{G}(H)$ is the maximal Abelian subgroup containing $H$ the ring $\mathbb{C}[Z^H]$ is generated by the irreducibly balanced monomials and the monomials $|c_{j}|^{2}$. 
If on the other hand $N_{G}(H)$ includes a group $G^{z}$ of spin flips the elements of $\mathbb{C}[Z^H]$ must be invariant under these spin flips as well. 
If a subset of the vectors spanning $Z^{H}$ supports an invariant $I$ of the maximal Abelian subgroup a  spin flip $g^{\alpha}\in{G^{z}}$ maps this set of vectors to a new set supporting an invariant $I^{\alpha}$. If the number of qubits involved in the spin flip is even or if the bidegree $(a,b)$ of $I$ is such that $a-b$ is divisible by four, then the sum $I+I^{\alpha}$ is invariant under the spin flip. Let the index $\alpha$ run over all the possible sets of vectors related by spin flips including the identity operation.
Then the invariants of $N_G(H)$ are the sums $\sum_{\alpha}I^{\alpha}$ where either the bidegree $(a,b)$ of $I$ is such that $a-b$ is divisible by four or all spin flips involve an even number of qubits.
The $N_G(H)$ invariants are thus a subring of the invariants of the maximal Abelian subgroup of $G$ that contains $H$.

With the general form of $\mathbb{C}[Z^H]$ known we can state the following.

\begin{theorem}
\label{thew}
The closure of the set of states for which a group $H$ that is diagonalizable by local unitary operations with a non trivial action on each qubit is the symmetry group is an irreducible variety.
\end{theorem}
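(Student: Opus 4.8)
The plan is to reduce the statement to the elementary fact that a linear subspace is an irreducible variety. The essential input has already been assembled in the preceding discussion: by Theorem \ref{kullo} and the remarks following it, when $H$ is locally diagonalizable and acts non-trivially on every qubit, the set $X_H$ of states whose symmetry group is exactly $H$ is dense in $span(S(H))$ and its closure satisfies $X^H=span(S(H))$. Thus the variety whose irreducibility we must establish is nothing but the complex linear span of the basis vectors in $S(H)$.

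First I would make explicit that $X^H$ is a linear subspace. A state $|\psi\rangle=\sum_{j}c_{j}|s_{j1}\dots s_{jn}\rangle$ lies in $span(S(H))$ precisely when its only nonzero coefficients $c_{j}$ are those indexed by $S(H)$; these are linear conditions on the coefficients, so $span(S(H))$ is a $\mathbb{C}$-linear subspace of the Hilbert space $X$ of dimension $|S(H)|$, and in particular it is Zariski closed. Since $X^H$ is defined as the closure of $X_H$, and $X_H$ is dense in this already-closed set, one has $X^H=span(S(H))$.

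Next I would invoke the standard structure of affine space. Choosing the coefficients $\{c_{j}:|j\rangle\in S(H)\}$ as coordinates identifies $X^H$ with the affine space $\mathbb{A}^{|S(H)|}$, whose coordinate ring is the polynomial ring $\mathbb{C}[\,c_{j}:|j\rangle\in S(H)\,]$. This ring is an integral domain and so contains no zero divisors; equivalently, $\mathbb{A}^{|S(H)|}$ cannot be written as the union of two proper Zariski-closed subsets. Hence $X^H$ is irreducible, which is the claim. I would also note in passing that, because the quotient map $X^H\to Z^H$ is a surjective morphism and the image of an irreducible set is irreducible, this immediately yields the irreducibility of $Z^H$ needed to apply Theorem \ref{luna}.

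The only genuine content of the argument lies not in the final step---the irreducibility of a linear subspace is immediate---but in guaranteeing that $X^H$ really is linear, i.e.\ that the Zariski closure of $X_H$ fills out all of $span(S(H))$ rather than some proper subvariety. This is exactly what Theorem \ref{kullo} secures, by showing that $H$ is the full symmetry group on a measure-one, hence Zariski-dense, subset of $span(S(H))$; without the hypothesis of non-trivial action on every qubit this density can fail and $X^H$ need not coincide with the full linear span. I therefore expect verifying that the hypotheses of Theorem \ref{kullo} are in force to be the step requiring care, with the remainder being routine.
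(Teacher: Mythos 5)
Your proof is correct, and it reaches the conclusion by a genuinely different final step than the paper. Both arguments rest on the same essential input, namely that Theorem \ref{kullo} forces $X^H=span(S(H))$, so the variety in question is the full linear span of the basis vectors in $S(H)$. From there you observe that a linear subspace is an affine space $\mathbb{C}^{|S(H)|}$ whose coordinate ring is a polynomial ring, hence an integral domain, hence irreducible; this is the most elementary and airtight route, and your remark that irreducibility of $Z^H$ then follows because it is the image of $X^H$ under the surjective quotient morphism is exactly the form in which the result is needed for Theorem \ref{luna}. The paper instead argues by contradiction: if $X^H$ were reducible it would be a union of vanishing loci of polynomials in $\mathbb{C}[X^{H}]^{N_G(H)}$, whereas a generic element of $span(S(H))$ has all these polynomials non-vanishing. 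That version speaks more directly to the quotient $Z^H$ (whose closed subsets really are cut out by invariant polynomials), but as a proof that the variety $X^H$ itself is irreducible it is less tight, since a decomposition of $X^H$ could a priori involve non-invariant polynomials; your reduction to the irreducibility of a linear subspace closes that gap. In short, you buy rigor and simplicity at the cost of saying slightly less about the role of the invariant ring, which the paper's phrasing emphasizes; your identification of the verification of Theorem \ref{kullo}'s hypotheses as the only step requiring care is also accurate.
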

\begin{proof}
If $X^{H}$ is not irreducible it must be possible to express it as the union of sets where one or more of the polynomials in $\mathbb{C}[X^{H}]^{N_G(H)}$ vanishes. However for a generic element of $X^{H}$ which is a linear combination of all basis vectors in $S(H)$ all polynomials are non-vanishing. Hence, $X^{H}$ is irreducible.

\end{proof}
Thus, for each locally diagonalizable symmetry group $H$ that does not act trivially on any qubit the function field of $N_{G}(H)$ invariants on $X^{H}$ is isomorphic to the function field of $\mathrm{SU(2)}^{\times{n}}$ invariants on the symmetry stratum $X^{(H)}$. Furthermore, if $X^{(H)}$ is normal there is an isomorphism between $\mathbb{C}[Z^{H}]$ and $\mathbb{C}[Z^{(H)}]$ by Theorem \ref{luna}, i.e., in this case the structure of the ring of polynomial entanglement invariants is directly determined by the symmetry. 
We now make a few comments on how the individual invariants of $N_G(H)$ relate to different types of $G$ invariants and how the vanishing of the different invariants relate to different types of substrata of larger symmetry groups. 

As described in Refs. \cite{OS09,johanssonosterloh13} the irreducibly balanced monomials of bidegree $(a,b)$ where $a$ or $b$ is zero are related to ${\mathrm{SL(2)}}^{\times{n}}$ invariants. 
On a set $\mathcal{S}_{i}(H)$ of states in $Z^H$ corresponding to an irreducibly balanced set $s_i(H)$ any restriction of an ${\mathrm{SL(2)}}^{\times{n}}$ invariant polynomial to $Z^H$ is a power of $m_i$ \cite{OS09,johanssonosterloh13}. 
Thus, the irreducibly balanced monomials of this type are associated with the locally diagonalizable symmetry groups of SLOCC semistable states. 
When all the irreducibly balanced monomials of bidegree $(a,b)$ where $a$ or $b$ is zero vanishes, this indicates that the state is SLOCC unstable. If there are additional irreducibly balanced monomials this means that the symmetry stratum includes both SLOCC semistable and SLOCC unstable states. If all the irreducibly balanced monomials vanish this indicates a substratum where the symmetry group contains an Abelian subgroup which in turn includes $H$ as a proper subgroup. This substratum is by necessity a subset of the SLOCC unstable states.

A special case is when $S(H)$ of a locally diagonalizable a group contains a single irreducibly balanced set of vectors corresponding to a single pair of irreducibly balanced monomials related by complex conjugation of bidegree $(d,0)$ and $(0,d)$ respectively.

\begin{theorem}
\label{rty}
Assume that the generators of $\mathbb{C}[Z^{H}]$ include only one irreducibly balanced monomial $m_i$ up to complex conjugation. Assume further that the bidegree of this monomial is $(d,0)$.
Then there is a a single generator, up to complex conjugation, of the ${\mathrm{SL(2)}}^{\times{n}}$ invariant polynomials in $\mathbb{C}[Z^{(H)}]$ and it is of bidegree $(rd,0)$ for some $r$.

\end{theorem}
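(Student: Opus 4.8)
The plan is to determine the $\mathrm{SL(2)}^{\times n}$ invariant subring of $\mathbb{C}[Z^{(H)}]$ by pulling the question back to $Z^H$ and exploiting the hypothesis that $S(H)$ carries a single irreducibly balanced set $s_i(H)$. Since $X^H$ is irreducible by Theorem \ref{thew}, the varieties $Z^H$ and $Z^{(H)}$ are birationally equivalent, so the morphism $\chi$ of Theorem \ref{luna} is dominant and its pullback $\chi^{*}$ is injective on coordinate rings; an $\mathrm{SL(2)}^{\times n}$ invariant that vanishes on $X^H$ would vanish on the $G$-orbit closure $X^{(H)}$ and hence be zero in $\mathbb{C}[Z^{(H)}]$. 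It therefore suffices to count the generators of the $\mathrm{SL(2)}^{\times n}$ invariants after restricting them to $Z^H$ (and when $Z^{(H)}$ is normal the isomorphism of Theorem \ref{luna} makes this identification exact). Because there is only one balanced set, $\mathcal{S}_i(H)$ is dense in $Z^H$, so the result of Ref. \cite{OS09} quoted above applies globally: the restriction to $Z^H$ of any $\mathrm{SL(2)}^{\times n}$ invariant of bidegree $(a,0)$ is a power $m_i^{s}$, of bidegree $(sd,0)$. The task thus reduces to identifying which exponents $s$ occur and showing that they form a semigroup generated by a single element.

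First I would pin down the spin-flip action. By Theorem \ref{ullko} the group $N_{G}(H)$ is the maximal locally diagonalizable Abelian subgroup containing $H$ in product with a finite group $F$ of spin flips, whose representatives I take inside $\mathrm{SU(2)}^{\times n}$. Each spin flip permutes the basis vectors of $S(H)=s_i(H)$, hence permutes the coefficients $c_j$, and so carries $m_i$ to another balanced monomial of bidegree $(d,0)$ supported on the same set. Since $m_i$ is, by hypothesis, the unique irreducibly balanced monomial up to complex conjugation, this image must equal $m_i$ up to a phase. Consequently $F$ acts on the line $\mathbb{C}m_i$ through a character $\chi:F\to\mathbb{C}^{*}$ with values in roots of unity, and I let $r$ be the order of $\chi$, i.e.\ the least positive integer for which $m_i^{r}$ is $F$-invariant.

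The key step is then a divisibility constraint. Every $\mathrm{SL(2)}^{\times n}$ invariant is in particular invariant under $\mathrm{SU(2)}^{\times n}\supset F$, so its restriction $m_i^{s}$ is fixed by each $f\in F$; this forces $\chi(f)^{s}=1$, hence $\chi^{s}=1$ and $r\mid s$. Thus every occurring exponent is a multiple of $r$. Conversely, the invariant-comb construction of Ref. \cite{OS09} lifts the $F$-invariant balanced monomial $m_i^{r}$ to a genuine $\mathrm{SL(2)}^{\times n}$ invariant whose restriction to $Z^H$ is $m_i^{r}$, so $r$ itself occurs. The occurring exponents therefore form exactly $r\mathbb{N}$, the $(a,0)$-type invariants restrict onto $\mathbb{C}[m_i^{r}]$, and together with the complex-conjugate $(0,b)$-type invariants this shows that the $\mathrm{SL(2)}^{\times n}$ invariant subring of $\mathbb{C}[Z^{(H)}]$ is generated, up to complex conjugation, by a single element of bidegree $(rd,0)$.

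I expect the main obstacle to be the realization half of the argument: establishing that the minimal $F$-invariant power $m_i^{r}$ is truly the restriction of a \emph{polynomial} $\mathrm{SL(2)}^{\times n}$ invariant, and not merely of an $\mathrm{SU(2)}^{\times n}$ invariant or a rational one. This is precisely the content of the comb construction, so the proof hinges on invoking it carefully and on verifying that the correspondence between $\mathrm{SL(2)}^{\times n}$ invariants and powers of $m_i$ is exponent-preserving. The divisibility direction, by contrast, is a soft consequence of $F$-invariance together with the uniqueness of $m_i$, and requires only bookkeeping of the phases $\chi(f)$.
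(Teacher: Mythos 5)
Your proposal shares the paper's foundation (any $\mathrm{SL(2)}^{\times n}$ invariant of bidegree $(a,0)$ restricts on $Z^H$ to a power $m_i^{s}$, and this restriction determines the invariant on all of $Z^{(H)}$ because every state there is locally unitarily equivalent to one in $X^H$), but then diverges. The paper runs a B\'ezout argument on the exponents: given two invariants of exponents $r_1,r_2$ it builds a function-field element of bidegree $(\gcd(r_1,r_2)d,0)$ of which both are powers, iterates, and then argues the resulting element is in fact a polynomial. You instead try to pin down the generator's degree explicitly, as $r$ equal to the order of the character through which the spin-flip part $F$ of $N_G(H)$ acts on $\mathbb{C}m_i$, splitting the proof into a divisibility half and a realization half. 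The divisibility half is sound (with $F$-representatives taken in $\mathrm{SU(2)}^{\times n}$, an $\mathrm{SL(2)}^{\times n}$ invariant is $F$-invariant, so its restricted exponent $s$ must satisfy $\chi^{s}=1$).

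The gap is the realization half, which you correctly identify as the main obstacle but do not actually close. Knowing only that every occurring exponent is a multiple of $r$ does not give a single generator: the semigroup of occurring exponents could be, say, $\{2r,3r,4r,\dots\}$, which is not generated by one element. To rule this out you assert that the invariant-comb construction of Ref.~\cite{OS09} produces a genuine \emph{polynomial} $\mathrm{SL(2)}^{\times n}$ invariant of bidegree exactly $(rd,0)$ restricting to $m_i^{r}$. Neither this paper nor the cited statement of the comb results guarantees that the minimal $F$-invariant power of $m_i$ is attained by a polynomial invariant; the comb construction yields invariants of particular degrees dictated by the combs used, and there is no a priori reason its output realizes the minimal degree $rd$ rather than some higher multiple. (The three-qubit GHZ stratum is the case to test: $m_i=c_{000}c_{111}$, $d=2$, the only generator is the degree-four hyperdeterminant restricting to $m_i^2$, and the character of $i\sigma_x\otimes i\sigma_x\otimes i\sigma_x$ on $m_i$ is indeed $-1$ so $r=2$ — but that this coincidence is automatic is precisely what must be proved.) A secondary, smaller point: your density claim for $\mathcal{S}_i(H)$ in $Z^H$ requires $s_i(H)=S(H)$, not merely that $S(H)$ contains a single irreducibly balanced subset; you should either note this or restrict the argument accordingly. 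The paper's gcd route avoids committing to a value of $r$ altogether, at the cost of its own terse final step showing the B\'ezout element is a polynomial rather than a proper quotient.
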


\begin{proof}

Assume that $P_{1}$ and $P_{2}$ are two polynomials of $\mathbb{C}[Z^{(H)}]$ that does not contain complex conjugated coefficients of the state vector. Their bidegrees must be multiples of $(d,0)$. Assume the bidegree of $P_1$ is $(r_1d,0)$ and the bidegree of $P_2$ is $(r_2d,0)$.

By B{\'e}zout's identity there is an element $P$ of the function field $\mathbb{C}(Z^{(H)})$ of bidegree
$(gcd(r_1,r_2)d,0)$, where $gcd(r_1,r_2)$ is the greatest common divisor of $r_1$ and $r_2$.

Consider $P^{\frac{r_1}{gcd(r_1,r_2)}}$ and $P_1$. These two invariants are of the same bidegree. Their restrictions to $Z^H$ are up to a complex factor both equal to $m_i^{r_1}$. Compensating for this complex factor the two invariants have the same value everywhere on $Z^H$. Since every state in $Z^{(H)}$ is local unitary equivalent to a state in $Z^H$ the two invariants have the same value everywhere on $Z^{(H)}$. Therefore, their difference is not in $\mathbb{C}(Z^{(H)})$ and they are equivalent as elements of $\mathbb{C}(Z^{(H)})$, i.e.,

 \begin{eqnarray}P_1=P^{\frac{r_1}{{\mathrm{gcd}}(r_1,r_2)}}.\end{eqnarray}
 Thus, $P_1$ is a power of $P$.
By repeating the argument it can be seen that there is an element in
$\mathbb{C}(Z^{(H)})$ such that all ${\mathrm{SL(2)}}^{\times{n}}$ invariants in $\mathbb{C}[Z^{(H)}]$ are powers of it. This element is not a quotient of polynomials since this would imply that all ${\mathrm{SL(2)}}^{\times{n}}$ invariants of $\mathbb{C}(Z^{(H)})$ were quotients. Thus, $\mathbb{C}[Z^{(H)}]$ has only one generator of the ${\mathrm{SL(2)}}^{\times{n}}$ invariant polynomials.

\end{proof}
Theorem \ref{rty} shows that for the special case of SLOCC semistable states with a locally diagonalizable symmetry group that acts non-trivially on all qubits, and for which there is a single irreducibly balanced set $s_{i}(H)$ of vectors defining a canonical form, there is only one ${\mathrm{SL(2)}}^{\times{n}}$ invariant polynomial among the generators of $\mathbb{C}[Z^{(H)}]$. 
In other words such a symmetry can be directly associated with a unique entanglement measure derived from the polynomial. However, it must be stressed that this does not exclude the possibility that several locally diagonalizable symmetries may be associated with the same ${\mathrm{SL(2)}}^{\times{n}}$ invariant polynomial in which case their symmetry strata are distinguished by other invariants.

The locally diagonalizable symmetry groups of SLOCC unstable states falls in two categories, those which are subgroups of ${D\times\mathrm{SU(2)}}^{\times{n}}$ for some discrete cyclic group $D$, and those which are subgroups only of ${\mathrm{U(1)}}\times{\mathrm{SU(2)}}^{\times{n}}$.
The first type is described by the irreducible monomials with bidegree $(a,b)$ where $0\neq a\neq b\neq 0$ and $|a-b|=|D|$. 
The relation between this type of monomials and ${\mathrm{SU(2)}}^{\times{n}}$ invariant polynomials is described in \cite{johanssonosterloh13}. 
The locally diagonalizable symmetry groups which are subgroups only of ${U(1)\times\mathrm{SU(2)}}^{\times{n}}$ have symmetry strata whose coordinate rings are polynomials made up from factors of the type $|c_{j}|^{2}$.
These are related to ${\mathrm{U(2)}}^{\times{n}}$ invariant polynomials on $Z^{(H)}$ of bidegree type $(a,a)$ since any such polynomial is made up of factors of the type $|c_{j}|^{2}$.

We end by commenting on the role of invariants that are made up of of the factors $|c_{j}|^{2}$
in the coordinate ring of a stratum of a locally diagonalizable symmetry which acts non-trivially on all qubits. 
These are related to substrata corresponding to non-Abelian symmetries and to Abelian symmetry groups that are not locally diagonalizable. From Theorem \ref{kullo} we have that the existence of a symmetry of this type required the state to satisfy a condition on the coefficients of the basis vectors in $S(H)$ equivalent to the at least one reduced one-qubit density matrix being maximally mixed. These conditions are of the form 

\begin{equation}\label{rr}\sum_{j\in{S_{k0}}}|c_j|^2-\sum_{j\in{S_{k1}}}|c_j|^2=0,\end{equation}
where $S_{k0}$ and $S_{k0}$ are the sets of coefficients of basis vectors in $S(H)$ with the $k$th qubit being in state $0$ and $1$ respectively. If $N_G(H)$ does not contain any spin flips the polynomial in the left hand side of Eq. \ref{rr} is in $\mathbb{C}[Z^H]$. If $N_G(H)$ contains spin flips there will be a polynomial in $\mathbb{C}[Z^H]$ containing the left hand side of Eq. \ref{rr} as a factor. 
Thus, a substratum corresponding to a non-Abelian or locally non-diagonalizable Abelian symmetry is always the zero locus of one or more polynomials of this type in $\mathbb{C}[Z^H]$. 

Theorem \ref{kullo} thus implies that a symmetry group with non-Abelian or locally non-diagonalizable Abelian action on the state space of the $k$th qubit is possible only if the state of this qubit contains no local information.
A symmetry group with this type of action on every single qubit state space can thus occur within $Z^{(H)}$ only if all reduced density matrices are maximally mixed, i.e., only for the maximally entangled states. Examples of non-Abelian symmetry groups of this type are those of the $GHZ$ states \cite{carteret} and cluster states \cite{briegelraussdorff}. In the symmetry stratification we thus find these maximally entangled states among the smallest strata.
Moreover, for the permutation invariant states Theorem \ref{kullo} together with the permutation invariance implies that non-Abelian symmetries or locally non-diagonalizable symmetries inside stratum closures of locally diagonalizable symmetry groups with non-trivial action on all qubits occur only for the maximally entangled states.

\section{Discussion}
\label{sec6}

We have reviewed the symmetry stratifications of Hilbert space and seen that it is possible to choose a set of entanglement invariants of an $n$-qubit system such that the symmetry strata of the system correspond to varieties defined by the zero locus of one or more of the invariants. There is thus always a set of functions of entanglement that go to zero when the system transitions to a more symmetric state.
Moreover, the same set of functions describe the transition between two given symmetry strata regardless of where in the stratum boundary the transition occurs.
Local unitary symmetry is an important manifestation of entanglement and are relevant for many quantum information tasks and for the understanding of different phases of matter. Therefore, a description of entanglement that naturally captures these symmetries may be useful.

The problem of describing entanglement on a symmetry stratum of a group $H$ can be reduced to the problem of describing the invariants under the group which preserves the symmetry $H$. 
In the presence of a symmetry a description of this kind may simplify the analysis of entanglement properties.
Furthermore, we discussed conditions for when the ring of polynomial entanglement invariants on the symmetry stratum or alternatively its function field is isomorphic to the ring of invariants of the symmetry preserving operations.

As a special case we studied the locally diagonalizable symmetry groups that have a non trivial action on each qubit. 
The closure of the set of states with a particular symmetry group $H$ of this type can be expanded in a set $S(H)$ of basis vectors. This set of vectors thus serve as a canonical form for all the states in the closure of the symmetry stratum. Using the canonical form one can construct the coordinate ring $\mathbb{C}[Z^{H}]$ which is sufficient to describes the entanglement properties of states within the symmetry stratum. In other words, $\mathbb{C}(Z^{H})$ distinguishes inequivalently entangled states once they are brought to the canonical form. We also described conditions for when a SLOCC invariant can be directly associated to a stratum of such a symmetry.

We end by commenting on the role of symmetries in some recently proposed classification schemes for entanglement.
Several classifications of multipartite entanglement using invariants and canonical forms with the aim of achieving an arrangement of entangled states into a finite number of classes have been proposed. In particular, we consider Refs. \cite{verstraete, viehmann, holweck, johanssonosterloh13} where the case of four qubits has been studied. In the first of these, Ref. \cite{verstraete}, the infinite set of SLOCC entanglement classes is arranged into nine different families. While thus achieving a tractable classification scheme this arrangement does not distinguish between states with qualitatively different entanglement properties such as for example the four-qubit GHZ state and the cluster states. 

These qualitatively different types of entanglement can be distinguished by the polynomial invariants.
For four qubits the subring of polynomial invariants that are $\mathrm{SL(2)}^{\times{4}}$ invariant is generated by four polynomials \cite{luq}.
In Ref. \cite{OS05} maximally entangled states representing inequivalent types of four-qubit entanglement were found using $\mathrm{SL(2)}^{\times{4}}$ invariants constructed through the invariant-comb approach. 
These states are the GHZ-state $\frac{1}{\sqrt{2}}(|1111\rangle+|0000\rangle)$, the so called X-state $\frac{1}{\sqrt{6}}(\sqrt{2}|1111\rangle+|1000\rangle+|0100\rangle+|0010\rangle+|0001\rangle)$, and the two states $\frac{1}{2}(|1111\rangle+|1100\rangle+|0010\rangle+|0001\rangle)$ and $\frac{1}{2}(|1111\rangle+|1010\rangle+|0100\rangle+|0001\rangle)$, related by a permutation of qubits, which are local unitary equivalent to cluster states. 

A classification scheme for four qubits that distinguishes between these qualitatively different types of entanglement was proposed in Ref. \cite{viehmann}. In this scheme the four generators of the ring of
$\mathrm{SL(2)}^{\times{4}}$ invariants are chosen such that each of them is non-vanishing only on one of the four states in Ref. \cite{OS05}. A similar classification scheme based on the subring of invariants with bidegrees $(a,b)$ for $a\neq{b}$ has been discussed in \cite{johanssonosterloh13}.

The classification schemes in Refs. \cite{viehmann, johanssonosterloh13} can be understood in terms of symmetry strata in the following way.
Each of the four states representing different entanglement types in \cite{OS05} is such that the basis vectors define a canonical form of the symmetry stratum of a non-trivial locally diagonalizable symmetry group. 
For each of these symmetry groups the canonical form is a single irreducibly balanced set of vectors. 
Therefore, each of the four invariants in \cite{viehmann} is the unique generator of the $\mathrm{SL(2)}^{\times{4}}$ invariants in its respective symmetry stratum. 
Therefore, this classification automatically captures the structure of the four symmetry stratum closures associated with the four states. 
The classification in \cite{johanssonosterloh13} works in the same way but distinguishes a larger set of symmetry strata since it uses a larger set of invariants.

Finally, we comment on the geometric classification scheme for four qubits presented in Ref. \cite{holweck}. The purpose of this scheme is to create a geometric picture of the different kinds of entanglement in terms of algebraic varieties combining the approaches of using invariants and canonical forms. This scheme does not explicitly take symmetries into account and the description in \cite{holweck} does not include the full four-qubit state space, but it still distinguishes between some of the symmetry strata including those of the four-qubit W and GHZ states as well as that of the X state.

In conclusion we can see that the classification of entangled states in terms of symmetry strata is  closely related to some of the already existing classifications.

\section{Acknowledgements}The author is grateful to Ingrid Irmer for help as well as for comments and discussions. The author also thank Andreas Osterloh, Jens Siewert, and Robert Zeier for discussions and Markus Grassl for discussions and useful comments, as well as for pointing out a mistake in a previous version.
Support from the National Research Foundation and the Ministry of
Education (Singapore), the Marie Curie COFUND action through the ICFOnest program, and the John Templeton Foundation is acknowledged.

\section*{References}

\end{document}